\theoremstyle{definition}
\newtheorem{definition}{Definition}[section]
\newtheorem{prop}{Proposition}
\DeclareMathOperator*{\argmax}{arg\,max}
\DeclareMathOperator*{\argmin}{arg\,min}
\title{\textbf{Scalable Bayesian Clustering for Integrative Analysis of Multi-View Data}}
\author{Rafael Cabral$^1$, Maria de Iorio$^{1,2}$, Andrew Harris$^1$}
\date{%
    $^1$Department of Paediatrics, Yong Loo Lin School of Medicine, National University of Singapore\\%
    $^2$Singapore Institute for Clinical Sciences, A*STAR\\[2ex]%
    \today
}
\def\spacingset#1{\renewcommand{\baselinestretch}%
{#1}\small\normalsize} \spacingset{1}
\newcommand{\multiline}[1]{%
  \begin{tabularx}{\dimexpr\linewidth-\ALG@thistlm}[t]{@{}X@{}}
    #1
  \end{tabularx}
}
\begin{document}

\maketitle

\begin{abstract}
In the era of Big Data, scalable and accurate clustering algorithms for high-dimensional data are essential. We present new Bayesian Distance Clustering (BDC) models and inference algorithms with improved scalability while maintaining the predictive accuracy of modern Bayesian non-parametric models. Unlike traditional methods, BDC models the distance between observations rather than the observations directly, offering a compromise between the scalability of distance-based methods and the enhanced predictive power and probabilistic interpretation of model-based methods. However, existing BDC models still rely on performing inference on the partition model to group observations into clusters. The support of this partition model grows exponentially with the dataset's size, complicating posterior space exploration and leading to many costly likelihood evaluations. Inspired by K-medoids, we propose using tessellations in discrete space to simplify inference by focusing the learning task on finding the best tessellation centers, or ``medoids." Additionally, we extend our models to effectively handle multi-view data, such as data comprised of clusters that evolve across time, enhancing their applicability to complex datasets. The real data application in numismatics demonstrates the efficacy of our approach.

\end{abstract}

\textbf{Keywords:} Bayesian non-parametrics, Bayesian Distance Clustering, Voronoi Tessellation, K-medoids, Scalability

\spacingset{1.45} 

\section{Introduction}

There is a need for scalable Bayesian clustering algorithms that allow for uncertainty in the number of clusters $K$. Traditional probabilistic Bayesian models with posterior inference, like mixture models \citep{mclachlanmixture, fraley2002model} and species sampling processes \citep{pitman1996some} (e.g., the Dirichlet process), are often slow and not scalable. On the other hand, distance-based clustering tools in the machine learning community, like K-means and K-medoids, though fast, do not provide uncertainty quantification and tend to perform poorly in complex scenarios \citep{hastie2009elements}. A middle ground is thus desirable.

\cite{duan2021bayesian},\cite{rigon2023generalized}, and \cite{natarajan2023cohesion} introduce a new class of Bayesian Distance Clustering (BDC) models to address the previous issues. Let $\mathcal{T} = \{\mathcal{C}_1,\dotsc, \mathcal{C}_K\}$ denote the partition of $N$ objects into $K$ clusters, where $\mathcal{C}_k$ is a set containing the indexes of the objects in cluster $k$. BDC consists in eliciting a likelihood on the distance matrix $\mathbf{D} = [d(x_i,x_j)]_{ij}$ for some distance metric $d$ instead of eliciting a likelihood on the data $\mathbf{x}=\{x_1,\dotsc, x_N\}$ directly. For a review, see \cite{xu2015comprehensive}. By not modeling the data directly, we achieve greater robustness with regard to model assumptions. \cite{natarajan2023cohesion} specify a likelihood for pairwise distances as follows
\begin{equation}\label{eq:mainequation}
    \pi\left(\boldsymbol{D} \mid \boldsymbol{\theta}, \boldsymbol{\lambda}, \mathcal{T}\right)=\underbrace{\prod_{k=1}^N \prod_{\substack{i, j \in \mathcal{C}_k \\ i<j}} f\left(D_{i j} \mid \lambda_k\right)}_{\text{cohesion}}\underbrace{\prod_{(k, t) \in A} \prod_{\substack{i \in \mathcal{C}_k \\ j \in \mathcal{C}_t}} g\left(D_{i j} \mid \theta_{k t}\right)}_{\text{repulsion}}
\end{equation}
where  $A = \{(k,t): 1 \leq k < t \leq k\}$, and $f$ and $g$ are probability densities. The likelihood incorporates cohesion and repulsion terms to ensure clusters are composed of objects with small internal dissimilarities (cohesion) and clear separation from other clusters (repulsion). The repulsion term also enhances cluster identifiability. This approach addresses the lack of probabilistic interpretation in distance-based clustering methods and the lack of computational scalability in traditional probabilistic Bayesian models. See \cite{natarajan2023cohesion} and\cite{rigon2023generalized} for a theoretical justification.



A crucial aspect of BDC is eliciting an adequate prior on the partition $\mathcal{T}$. Popular choices in the Bayesian non-parametric literature involve prior on partitions implied by a species sampling model \citep{pitman1996some} where the probability of \(\mathcal{T} = \{\mathcal{C}_1, \ldots, \mathcal{C}_K\}\) is given by \(p(n_1, \ldots, n_K)\), where \(n_k = |\mathcal{C}_k|\) denotes the number of elements in \(\mathcal{C}_k\). The function \(p\) is a symmetric function of its arguments and is known as the exchangeable partition probability function (EPPF). 

However, the previous models for the partition $\mathcal{T}$ have a   large support which grows according to the Bell number of $N$, making posterior exploration through MCMC difficult and leading to inference algorithms that are not scalable. This paper builds upon the BDC model in \cite{natarajan2023cohesion} and introduces new partition models for BDC with the accompanying inference algorithms. We take inspiration from  K-Medoids \citep{kaufman2009finding} and infer probabilistically which observations are the ``medoids" instead of performing inference on the random partition model $\mathcal{T}$ directly. In more detail, each medoid is associated with a cluster, and then the clusters are defined by assigning the objects that are closest to the respective medoid in terms of some distance. As will be explained in Section \ref{sect:tesselation}, this amounts to performing a Voronoi tesselation in discrete space. This procedure improves computational efficiency since inference on the medoids is often easier and more scalable. It also permits the specification of likelihoods on the distance matrix that can be evaluated with linear computational complexity instead of quadratic complexity. Our simulations show that the new models yield comparable, sometimes better, predictive performance while significantly reducing computational time compared to the BDC models of \cite{natarajan2023cohesion}, while also allowing for uncertainty quantification on $K$, the number of clusters. On the other hand, the predictive performance is considerably higher than that of K-medoid implementations, especially in scenarios where the data is noisy, and there is overlap in the cluster groups. We discuss the relationship with K-medoids by showing that the K-medoid estimate can be seen as a maximum a posteriori estimate of the class of BDC models with the tessellation priors we propose. 

The second main contribution is introducing BDC models for multi-view data to cluster objects that contain several features, where there is a possibly distinct partition for each feature (see \cite{duan2020latent}, \cite{franzolini2023conditional} and references therein). For instance, consider data whose clustering changes at different measurement times. In this case, each measurement time corresponds to a different feature. These data are frequently found in applications, and we extend our previous BDC construction to model the distance matrices of all features jointly, accounting for dependencies between the partition of each feature. 

This paper is structured as follows. In Section \ref{sect:BTM}, we cover the basics of BDC and introduce partitions models defined through tesselations in discrete space. Next, in Section \ref{sect:multiview}, we present BDC for multi-view data. Section \ref{sect:simulations} presents the main simulation results, Section \ref{section:numismatics} applies the methodology to a problem from digital numismatics, and finally, in Section \ref{sect:discussion}, we discuss the main results and present directions for future work.

\section{Bayesian distance clustering with tesselation priors}\label{sect:BTM}

For convenience, let us consider the cluster allocation labels $z_i$ where $z_i=j$ if $i \in \mathcal{C}_j$ and let $[N] = \{1,\dotsc, N\}$ denote the index set of all $N$ objects.
 
\subsection{Likelihood specification} \label{sect:likelihood}

The first term of the likelihood in \eqref{eq:mainequation} should favor clusters composed of objects with small dissimilarities among themselves (cohesion),  while the second term should favor large dissimilarities to objects in other clusters (repulsion). To achieve the previous cohesion behavior, we set $f$ in \eqref{eq:mainequation} to be the density function of a $ \text{Gamma}(\delta_1,\lambda_k)$ distribution, with $\delta_1<1$, so that the density is decreasing and thus penalizing large distances among objects in the same cluster. Conversely, to achieve the repulsion behavior, we set $g$ in \eqref{eq:mainequation} to be the density function of a
$\text{Gamma}(\delta_2,\theta_{kt})$, with $\delta_2>1$ so that the $g(0|\theta_{kt})=0$ and increasing near 0 in order to penalize small distances among objects in different clusters. Further, for computational convenience, we elicit  conjugate priors,
$$
\lambda_k\overset{i.i.d.}{\sim}\text{Gamma}(\mu,\beta) \ \ 
\theta_{kt}\overset{i.i.d.}{\sim}\text{Gamma}(\zeta,\gamma)
$$
where $\lambda_k$ is a rate parameter of $f$ specific for each cluster and $\theta_{kt}$ is the rate parameter of $g$ specific for each pair of clusters. These parameters control the dispersion in the pairwise distances of objects in the same cluster and objects in different pairs of clusters, respectively.

To set the hyperparameters $\delta_1,\delta_2,\mu,\beta,\zeta$, and $\gamma$, we follow the empirical Bayes procedure laid out in Algorithm 1 of \cite{natarajan2023cohesion}. Finally, we can integrate out $\lambda_k$ and $\theta_{kt}$ from $\pi\left(\boldsymbol{D} \mid \boldsymbol{\theta}, \boldsymbol{\lambda}, \mathcal{T}\right)$ and calculate a closed-form expression for the marginal likelihood $\pi\left(\boldsymbol{D} \mid \mathcal{T} \right)$ conditionally only on the partition $\mathcal{T}$. Performing inference directly on $\pi\left(\boldsymbol{D} \mid \mathcal{T}\right)$ will speed up computations by avoiding estimating $\lambda_k$ and $\theta_{kt}$. The density $\pi\left(\boldsymbol{D} \mid \mathcal{T} \right)$ is derived in Appendix \ref{sect:marginallik}.

The model is completed by specifying a prior on the partition $\mathcal{T}$. A common choice is the EPPF of the Dirichlet process \cite{ferguson1973bayesian}. We will compare the new partition models introduced in the next section with the EPPF derived from the Pitman-Yor process (PY). For further information, see \cite{pitman1995exchangeable}. The EPPF of the PY process is
\begin{equation}\label{eq:EPPF}
    \pi(n_1, \ldots, n_K) = \frac{[M+\theta]_{K-1 ; \theta}}{[M+1]_{N-1;1}} \prod_{i=1}^K[1-\varphi]_{n_i-1;1}
\end{equation}
where for real numbers $x$ and $a$ and non-negative integer $m$
$$
[x]_{m ; a}=\left\{\begin{array}{l}
1 \quad \text { for } m=0, \\
x(x+a) \ldots(x+(m-1) a) \text { for } m=1,2, \ldots
\end{array}\right.
$$
In \eqref{eq:EPPF}, \(K\) denotes the number of clusters in \(\mathcal{T}\), \(M > 0\) is a concentration parameter, and \mbox{\(0 \leq \varphi < 1\)} is a discount parameter. When \(\varphi = 0\), the Pitman-Yor process simplifies to the Dirichlet process. Performing inference in BDC with the previous partition priors or the exchangeable sequence of cluster partition prior utilized in \cite{natarajan2023cohesion} can lead to issues in terms of scalability and poor mixing and convergence for high-dimensional data. To overcome these challenges, we introduce next a prior on the partition defined through an extension of Voronoi tesselations to discrete space. This approach is adopted in a different context in \cite{boom2023graph} to cluster nodes on a graph. 



 
\subsection{Voronoi tesselation models in discrete space}\label{sect:tesselation}


Voronoi tesselations as defined in \cite{denison2002bayesian2}  split a continuous space $\mathcal{X}$ into $K$ regions, with tesselation centers $t_1,\dotsc, t_K$. 
Points are assigned to a region $R_i$ based on their ``distance" from the centers, so that $R_i = \{x \in \mathcal{X}: d(x,t_i) < d(x,t_j) \text{ for all } j\neq i \}$, where $d$ is a predefined distance metric defined for all points $x_1,x_2 \in \mathcal{X}$. See Figure \ref{fig:subpartition} for a visual illustration where $d$ is the Euclidean distance.

We extend the previous definition of Voronoi tesselation to discrete space. Unlike tessellations in continuous space, where any point in $\mathcal{X}$ could be a tesselation center, we assume that the tesselation centers must correspond to one of the $N$ objects, thus easing the search for the optimal tesselation centers. The remaining objects are assigned to clusters based on the pairwise distances with the tesselation centers. We also refer to the tesselation centers $t_i$ as ``medoids" to highlight similarities with the standard K-Medoids clustering. Let  $\boldsymbol{\gamma}$ be the set containing the $K$ indexes of the medoids. For example, if $\boldsymbol{\gamma} = \{5,10,20\}$ then the objects with indexes $5$, $10$ and $20$ are the medoids. Similarly to Voronoi tesselations in continuous space, we assign all the objects nearest to medoid $i$ to cluster $i$:
\begin{align}\label{eq:gamma}
    \mathcal{C}_i(\mathbf{D}, \boldsymbol{\gamma}) &= \{j \in [N]: \argmin_{l\in \boldsymbol{\gamma}} D_{lj} = i\} \ \ \ \text{which define the partition model} \\ 
    \mathcal{T}(\mathbf{D}, \boldsymbol{\gamma}) &= \{\mathcal{C}_1(\mathbf{D}, \boldsymbol{\gamma}), \mathcal{C}_2(\mathbf{D}, \boldsymbol{\gamma}),\dots, \mathcal{C}_K(\mathbf{D}, \boldsymbol{\gamma}) \} \nonumber
\end{align}

We emphasize that the partition $\mathcal{T}$ is a deterministic function of both the medoid set $\boldsymbol{\gamma}$ and the distance matrix $\mathbf{D}$, $\mathcal{T}(\mathbf{D}, \boldsymbol{\gamma})$, but to simplify notation, we will often refer to it as $\mathcal{T}$. We also assume that the minimization in \eqref{eq:gamma} has always a unique solution.

Since the partition  $\mathcal{T}$ is deterministic given $\boldsymbol{\gamma}$ and $\mathbf{D}$, we elicit the likelihood in Section \ref{sect:likelihood} conditionally on the medoid set $\boldsymbol{\gamma}$. By abusing notation, we consider $\pi(\mathbf{D} | \boldsymbol{\gamma}) = \pi(\mathbf{D} | \mathcal{T}(\mathbf{D}, \boldsymbol{\gamma}) )$. We then elicit a prior distribution on $\boldsymbol{\gamma}$ and perform inference on $\boldsymbol{\gamma}$ by approximating the posterior density
\begin{equation}\label{eq:tesselation}
    \pi(\boldsymbol{\gamma}|\mathbf{D}) = \frac{\pi(\mathbf{D}|\boldsymbol{\gamma})\pi(\boldsymbol{\gamma})}{\int \pi(\mathbf{D}|\boldsymbol{\gamma})\pi(\boldsymbol{\gamma}) d\boldsymbol{\gamma}}  
\end{equation}

Lastly, when computing posterior samples of $\boldsymbol{\gamma}$, we also automatically obtain posterior samples of the partition $\mathcal{T}(\mathbf{D}, \boldsymbol{\gamma})$ which informs about the clustering structure of the data.

Performing inference on the medoid set $\boldsymbol{\gamma}$ instead of the partition model $\mathcal{T}$ leads to several computational advantages. First, the sample space being explored is much smaller: given $N$ objects, there are $2^{N}-1$ different allowed sets for  $\boldsymbol{\gamma}$ since each object can either be a medoid or not, and we exclude the case where no object is a medoid. On the other hand, in typical random partition models, the number of possible partitions for a set of size $N$ is the Bell number $B_N$, which grows much faster than $2^{N}-1$. Secondly, there are efficient and easy-to-implement algorithms to perform inference on these models. In this paper, we implement a birth-death algorithm, where at each step of the Metropolis-Hastings algorithm, we generate a new $\boldsymbol{\gamma}$ proposals by adding or removing a medoids in the medoids set, followed by a move step where a medoid is changed at random keeping $K$ fixed. It is also possible to consider a Gibbs sampler, and these algorithms are detailed in the Appendix \ref{sect:Alg2}. 

Lastly, this formulation also allows for alternative likelihood specifications, which are more computationally efficient to evaluate. To improve scalability, we modify the cohesion term of \eqref{eq:mainequation}, and instead of modeling all pairwise distances between objects in the same cluster, we only model the distances between the objects in a cluster and the cluster medoid. Likewise, for the repulsion term, we only model the distances between the medoids, instead of all pairwise distances between objects in different clusters. The new likelihood is
\begin{equation}\label{eq:linear_simple}
    \pi^{\text{lin}}\left(\boldsymbol{D} \mid \boldsymbol{\gamma},\boldsymbol{\lambda} \right)= \underbrace{\prod_{i\in\boldsymbol{\gamma}} \prod_{j  \in \mathcal{C}_i(\mathbf{D}, \boldsymbol{\gamma}) \setminus \{i\}} f\left(D_{i j} | \lambda_i\right)}_{\text{cohesion}} \underbrace{\prod_{\substack{i,j \in \boldsymbol{\gamma} \\ i \neq j}}g(D_{ij})}_{\text{repulsion}}
\end{equation}
The computational complexity of evaluating the previous likelihood is linear in $N$ (hence the superscript ``lin" for linear), unlike \eqref{eq:mainequation}, which is quadratic. Similarly to \eqref{eq:mainequation}, we can also integrate out $\boldsymbol{\lambda}$, and express the likelihood conditionally only on $\boldsymbol{\gamma}$ (and the hyperparameters $\delta_1,\delta_2,\mu,\beta$ and $\theta$). The likelihood is given by
\begin{align*}
    \pi\left(\boldsymbol{D} \mid \boldsymbol{\gamma} \right) &= \prod_{i=1}^K \mathcal{L}_k^{(1)} \prod_{0<i<j<K} \mathcal{L}_{kt}^{(2)}, \ \ \ \text{where} \\ \nonumber
    & \mathcal{L}_i^{(1)} = \frac{\Gamma(\delta_1)^{-n_k^\star} \Gamma(\mu+n_k^\star\delta_1)}{\Gamma(\mu)} \beta^\mu \left( \prod_{\substack{j  \in \mathcal{C}_i(\mathbf{D}, \boldsymbol{\gamma}) \setminus \{i\}}} D_{ij}\right)^{\delta_1-1} \left( \beta + \sum_{\substack{ j  \in \mathcal{C}_i(\mathbf{D}, \boldsymbol{\gamma}) \setminus \{i\}}}  D_{ij}\right)^{-\mu-\delta_1 n_k^\star} \nonumber \\ 
    & \mathcal{L}_{ij}^{(2)} = \text{Gamma}(D_{ij}; \delta_2,\theta) \nonumber
\end{align*}
where $n_k^* = |\mathcal{C}_k|-1$, and  $\text{Gamma}(D_{ij}; \delta_2,\theta)$ is the density function of a Gamma distribution with shape parameter $\delta_2$ and rate parameter $\theta$. Appendix \ref{sect:hyperpar_select} describes a procedure in the spirit of empirical Bayes, similar to the one in \cite{natarajan2023cohesion}, to select the hyperparameters $\delta_1,\delta_2,\mu,\beta$, and $\theta$. 

\subsection{Prior on the medoid set}\label{section:priors}


We consider prior distributions on $\boldsymbol{\gamma}$ that only reflect prior beliefs on the number of clusters $K$, i.e. given $K$, the prior on the possible centers is uniform. These priors take the form $\pi(\boldsymbol{\gamma},K) = \binom{N}{K}^{-1} \pi(K),  \ 0 < K = |\boldsymbol{\gamma}| \leq N$, where $\pi(K)$   is the prior distribution on the number of clusters. In our context, it is natural to elicit a prior that penalizes the number of clusters, such as
\begin{equation}\label{eq:prior}
    \pi(\boldsymbol{\gamma}) = \binom{N}{K}^{-1} \times \text{TGeom}(K;p) \ \ \ \ 0<K = |\boldsymbol{\gamma}|\leq N
\end{equation}
where $\text{TGeom}(\cdot;p)$ is the probability mass function of a truncated geometric distribution with parameter $p$ and support between $1$ and $N$, inclusive. The first term ensures that any choice of $K$ medoids from $N$ objects is equally likely, and the second term penalizes the number of medoids since it decreases with $K$. 


\subsection{Computational complexity and properties}

\todo[inline]{Not enough properties... worth to have an Appendix/proposition on this?}

\todo[inline]{warning: \cite{duan2021bayesian} and \cite{rigon2023generalized} have linear likelihoods too... check your algorithm against theirs. They do not use medoids}

Table \ref{tab:comp_complexity} outlines the worst-case computational complexity of performing a likelihood evaluation based on \eqref{eq:mainequation} and \eqref{eq:linear_simple} and considering $K$ fixed. Additionally, we consider the scenario where only the cohesion term is present. In Big Data applications, a computational complexity of $\mathcal{O}(N^2)$ can be prohibitive, especially since such likelihood evaluations must be performed thousands or even millions of times during MCMC inference. In these cases, distance-based clustering based on \eqref{eq:linear_simple} is more desirable due to its linear computational complexity (assuming $K \ll N$ when considering the repulsion term).

\begin{table}[H]
    \centering
    \begin{tabular}{ccc}
         & Cohesion  & Cohesion + Repulsion \\ \hline 
        `Quadratic" Likelihood in \eqref{eq:mainequation} & $\mathcal{O}(N^2)$ & $\mathcal{O}(N^2)$ \\ 
        `Linear" Likelihood in \eqref{eq:linear_simple} & $\mathcal{O}(N)$ & $\mathcal{O}(N+K^2)$ \\\hline 
    \end{tabular}
    \caption{Computational complexity of the likelihood evaluations in \eqref{eq:mainequation}  and \eqref{eq:linear_simple}, with and without the repulsion term.} 
    \label{tab:comp_complexity}
\end{table}

In the birth-death algorithm (Appendix \ref{sect:Alg2}), besides performing a likelihood evaluation at each MCMC iteration, we also need to find the tesselation configuration given a set of centers as in \eqref{eq:gamma}, and so the computational complexity of each MCMC iteration is $\mathcal{O}(NK + L)$, where $L$ is the computational complexity of one likelihood evaluation given in Table \ref{tab:comp_complexity}. In the Gibbs sampler MCMC algorithm, we cycle through each object, and for each object, we perform two tesselations and two likelihood evaluations, so the computational complexity per iteration is $\mathcal{O}(N(NK+L))$. On the other hand, the computational complexity of the K-medoids algorithm PAM (Partitioning Around Medoids) is $\mathcal{O}(N(N-K)^2)$. More scalable K-medoid algorithms are available, including  CLARA (Clustering Large Applications) and CLARANS (Clustering Large Applications based upon RANdomized Search) \citep{schubert2019faster}. Even though K-medoid algorithms are faster, they perform considerably worse when the data is noisy, and the clusters overlap, as shown in Section \ref{sect:simulations}. 



The tessellation $\mathcal{T}(\boldsymbol{\gamma},\mathbf{D})$ depends on the choice of a specific distance metric to compute $\mathbf{D}$. Here, we investigate the properties of the tessellation, such as exchangeability, microclustering, and projectivity. The microclustering property means that $M_N/N \to 0$ in probability, where $M_N$ is the size of the largest cluster, and is a relevant property in several applications \citep{miller2015microclustering}. Proposition \ref{prop:proposition1} shows that this property is not present for an arbitrary matrix $\mathbf{D}$. A sequence of random partitions $\{\mathcal{T}_N\}_{N=1}^{\infty}$ is projective (or Kolmogorov consistent) if $\mathcal{T}_N$ is equal in distribution to the restriction of $\mathcal{T}_M$ to $[N]$ for $1 \leq N < M$. This property, however, is not appropriate for medoid-based clustering since the prior on the medoid set depends on $N$ (the support is $1, 2, \dotsc, N$). 


\begin{prop}\label{prop:proposition1} \emph{Exchangeability}: Let $\sigma$ be a finite permutation of the indexes $1,2,\dotsc,N$, let $\mathcal{T}^*$ be the partition $\mathcal{T}$ after the permutation, and let $\mathbf{D}^\star=[D_{\sigma(i),j}]_{i,j}$ be the distance matrix after permuting the rows. Then, for \eqref{eq:mainequation} and \eqref{eq:linear_simple}, $\pi(\mathbf{D}^\star|\mathcal{T}^*) = \pi(\mathbf{D}|\mathcal{T})$. \emph{Microclustering}: Consider an arbitrary distance matrix $\mathbf{D}$ and prior $\pi(\boldsymbol{\gamma})$. The induced prior on the partition given in \eqref{eq:gamma}, $\mathcal{T}(\boldsymbol{\gamma},\mathbf{D})$, does not satisfy the microclustering property.
\end{prop}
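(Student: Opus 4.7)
The proof splits into two independent parts, and the idea in each is much simpler than the general notation suggests. For exchangeability, my plan is to exploit the fact that both \eqref{eq:mainequation} and \eqref{eq:linear_simple} are products indexed by unordered pairs of objects, either within a cluster (cohesion) or across clusters (repulsion). A permutation $\sigma$ of $[N]$ carries $\mathcal{C}_k$ to $\mathcal{C}_k^\star$, and the corresponding distance entry is relabelled accordingly, so that $D^\star$ evaluated at the new pair equals $D$ evaluated at the original pair. Thus the permutation just reshuffles the factors of the product without changing their multiset of values, and the likelihood is preserved. I would write this out explicitly once for the cohesion product and once for the repulsion product in \eqref{eq:mainequation}; the medoid-based version \eqref{eq:linear_simple} then follows by the same reindexing, using that a medoid index in $\boldsymbol{\gamma}$ is simply carried to its image under $\sigma$ and continues to index the same cluster.

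For the microclustering part, the argument is essentially a pigeonhole bound combined with stochastic boundedness of $K$. By construction in \eqref{eq:gamma}, the clusters $\mathcal{C}_1(\mathbf{D},\boldsymbol{\gamma}),\dots,\mathcal{C}_K(\mathbf{D},\boldsymbol{\gamma})$ always form a partition of $[N]$, so the size of the largest cluster satisfies $M_N \geq \lceil N/K \rceil$ regardless of $\mathbf{D}$. Hence $M_N/N \geq 1/K$ almost surely. Under the prior \eqref{eq:prior} with $p$ fixed, $K$ is truncated geometric and therefore stochastically bounded uniformly in $N$: there exists a finite $C$ with $P(K \leq C) \geq 1/2$ for all $N$. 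Combining the two inequalities gives $P(M_N/N \geq 1/C) \geq 1/2$ for every $N$, which rules out convergence of $M_N/N$ to $0$ in probability. This also clarifies why the statement can tolerate an ``arbitrary'' $\mathbf{D}$: the failure of microclustering is driven entirely by the partitioning structure induced by a bounded number of medoids, not by any geometric feature of the data.

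The main obstacle I anticipate is pinning down the correct level of generality for the microclustering claim. The pigeonhole step holds for every $\mathbf{D}$, but the conclusion requires that the prior on $\boldsymbol{\gamma}$ does not force $K$ to grow with $N$. This is automatic for \eqref{eq:prior} and for any prior whose marginal on $K$ is stochastically bounded in $N$, but would fail in the degenerate case where the prior concentrates on $K = N$. I would either restrict the statement to priors satisfying this mild condition, or add a one-line remark noting the assumption; neither requires any additional computation.
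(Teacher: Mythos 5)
Your proposal is correct, but its microclustering half takes a genuinely different route from the paper. For exchangeability, the paper merely asserts that the property follows directly from \eqref{eq:mainequation} and \eqref{eq:linear_simple}; your explicit reindexing of the cohesion and repulsion products (and of the medoid set under $\sigma$ for \eqref{eq:linear_simple}) is the same idea written out in detail. For microclustering, the paper argues by exhibiting a single counterexample: it fixes the number of medoids $K$ and takes a distance matrix in which every object is at distance $a$ from object $1$ while all other pairwise distances exceed $a$, so the induced partition has one cluster of size $M_N=N-K+1$ and $K-1$ singletons, whence $M_N/N \not\to 0$. You instead prove a stronger statement: since \eqref{eq:gamma} always partitions $[N]$ into $K=|\boldsymbol{\gamma}|$ clusters, the pigeonhole bound gives $M_N/N \ge 1/K$ for every $\mathbf{D}$, and stochastic boundedness of $K$ under \eqref{eq:prior}, uniform in $N$ because the truncated geometric satisfies $P(K\le C)\ge 1-(1-p)^C$, yields $P(M_N/N\ge 1/C)\ge 1/2$ for all $N$, ruling out convergence in probability to zero. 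Your version covers every distance matrix, which matches the literal wording ``arbitrary $\mathbf{D}$'' better than the paper's single construction, at the cost of the explicit (and genuinely necessary) side condition that the prior does not force $K$ to grow with $N$; the paper avoids discussing this only because its counterexample fixes $K$ outright, and correspondingly it establishes only that microclustering can fail rather than that it fails for all $\mathbf{D}$. Your closing caveat about priors concentrating on $K=N$ is well taken: as stated the proposition cannot hold for literally arbitrary $\pi(\boldsymbol{\gamma})$, so either your restriction to stochastically bounded $K$ or the paper's fixed-$K$ reading is required.
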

\begin{proof}
The proof is given in Appendix \ref{sect:Algnested}.
\end{proof}


\subsection{Relationship with K-medoids}
Traditional K-medoid clustering algorithms also aim at identifying  which objects should be the ``medoids." For a review see \cite{kaufman2009finding} and \cite{jain2010data}. Formally, the K-medoid estimate for  $\boldsymbol{\gamma}$ is the solution of the following optimization problem:
\begin{equation}\label{eq:optimization}
     \hat{\boldsymbol{\gamma}} = \argmin_{\boldsymbol{\gamma}} \sum_{i\in\boldsymbol{\gamma}} \sum_{j \in \mathcal{C}_i(\mathbf{D},\boldsymbol{\gamma}) \setminus \{i\}} D_{i j}
\end{equation}
 We assign all the non-medoid objects nearest to medoid $i$ to cluster $i$, also according to \eqref{eq:gamma}, and the optimization consists of searching for the set of $K$ medoids that minimize the distance between each medoid $i$ and the non-medoid objects assigned to cluster $i$. The K-medoids estimate is the maximum a posteriori estimate of the BDC model in \eqref{eq:linear_simple} for specific choices of $f$, $g$ and prior $\pi(\boldsymbol{\gamma})$. More in detail, let us consider the  the density $f\left(D_{ij}|\lambda_i\right) \propto \exp(-D_{ij})$, ignore the repulsion term, and consider the prior \mbox{$\pi(\boldsymbol{\gamma}) \propto I[|\boldsymbol{\gamma}|=K]$} so that the number of medoids is fixed to $K$. Then, the  K-medoids estimate is the mode of the posterior density
 $$
   \pi\left(\boldsymbol{\gamma} \mid  \boldsymbol{D} \right) \propto \exp\left(- \sum_{i\in\boldsymbol{\gamma}}\sum_{j  \in \mathcal{C}_i(\mathbf{D}, \boldsymbol{\gamma}) \setminus \{i\}} D_{i j} \right) I[|\boldsymbol{\gamma}|=K]
 $$
where the ensuing partition estimate is given by $\hat{\mathcal{T}}(\mathbf{D}, \hat{\boldsymbol{\gamma}} ) = \{\mathcal{C}_1(\mathbf{D}, \hat{\boldsymbol{\gamma}} ), \dotsc, \mathcal{C}_K(\mathbf{D},\hat{\boldsymbol{\gamma}} )\}$. The advantage of our Bayesian formulation is the ability to quantify uncertainty on the partition and number of clusters. Also, as shown in the simulations, our MCMC algorithms are more robust in locating the modal region in a variety of scenarios compared to the standard algorithms utilized in K-medoids.

As an additional point, \cite{rigon2023generalized} demonstrate that the K-means estimate is also a maximum a posteriori estimate of a Bayesian model with posterior density
$
\pi(\mathcal{T}|\mathbf{x}) \propto \exp\left(- \sum_{i=1}^K \sum_{k \in \mathcal{C}_i} \| x_i - \tilde{x}_k \|_2^2 \right) I[|\mathcal{T}|=K],
$
where \(\tilde{x}_k = (1/|\mathcal{C}_i|) \sum_{i \in \mathcal{C}_i} x_i\) represents the centroid of cluster \(k\). The resulting partition can be seen as a Voronoi tesselation in continuous space where the objects of cluster $\mathcal{C}_k$ belong to the region $R_k$ in Section \ref{sect:tesselation} with center $t_k = \tilde{x}_k$. On the other hand, our partitions are defined through the medoids as in \eqref{eq:gamma}, and require performing Voronoi tesselation in discrete space. \cite{rigon2023generalized} also introduce the concept of a partition ``medoid," defined for cluster \(k\) as 
$
\argmin_{i \in \mathcal{C}_k} \tilde{D}(x_i, X_k),
$
where \(\tilde{D}(x_i, X_k)\) denotes the overall distance between object \(i\) and the other objects in cluster \(k\). However, the authors did not establish the link between their Bayesian models and K-medoids. They do not use the medoid set to define the partition model, as described in \eqref{eq:gamma}. Instead, they utilize partition models to define the medoids. Consequently, their approach involves performing inference on the partition model rather than directly on the medoid set, resulting in less computationally efficient algorithms.

\section{Bayesian distance clustering for multi-view data}\label{sect:multiview}

Datasets frequently present multivariate information often collected across distinct domains and with different feature support spaces. For instance, in our numismatics application, we have information about each coin's obverses and reverses. Another example is in longitudinal data analysis, where the underlying clustering structure of individuals is likely to change over time. Data of this type are often referred to as multi-view data, and the number and shapes of clusters may need to vary across features to accurately capture dependence structures and heterogeneity. For a review of clustering models for multi-view data, see \cite{franzolini2023conditional}. However, existing models for clustering multi-view data are based on modeling the data directly rather than the distance matrix, and here, we examine the latter. We consider $N$ objects with $M$ features, and so there are $M$ distance matrices $\mathbf{D}^{(i)}$ for $i=1,\dotsc,M$ with dimensions $N \times N$ pertaining to each feature.

 Previous work, such as \cite{caron2017generalized}, \cite{de2019bayesian}, and \cite{page2022dependent}, focus on modeling dependencies between random partition models. Since we model the medoid set directly and not the partition, the models in the previous papers cannot be applied directly. This section discusses two approaches to model dependencies between partitions in BDC with partitions defined through tesselations. One model assumes an order among features, i.e., a particular feature is considered as a reference, while the second model is more flexible. Following the notation in \cite{franzolini2023conditional}, we refer to different features as ``layers." The partition of layer $i$ is given by $\mathcal{T}_i = \{\mathcal{C}_{i,1},\dotsc, \mathcal{C}_{i,K_i}\}$ where $\mathcal{C}_{i,j}$ is the cluster $j$ of layer $i$ which is defined similarly to \eqref{eq:gamma} by grouping toguether the objects that are closest to medoid $j$ according to the distance matrix $\mathbf{D}^{(i)}$. Without loss of generality, we focus on the case where the $N$ objects have only two layers. We thus define two medoid sets $\boldsymbol{\gamma}_1$ and $\boldsymbol{\gamma}_2$ that induce two partitions $\mathcal{T}_1$ and $\mathcal{T}_2$ to cluster the first and second layers, respectively. Similarly to \cite{page2022dependent}, the joint likelihood of the distance matrices of the first and second layers, $\mathbf{D}^{(1)}$ and $\mathbf{D}^{(2)}$, are conditionally independent given the previous medoid sets. The posterior distribution is
\begin{align}\label{eq:joint_partition}
    &\pi(\boldsymbol{\gamma}_1,\boldsymbol{\gamma}_2 \mid \boldsymbol{D}^{(1)},\boldsymbol{D}^{(2)}) \propto \pi(\boldsymbol{D}^{(1)}\mid  \mathcal{T}_1)\pi(\boldsymbol{D}^{(2)}\mid   \mathcal{T}_2)\pi(\boldsymbol{\gamma}_1, \boldsymbol{\gamma}_2) \\
    &\mathcal{T}_1 = \mathcal{T}_1(\boldsymbol{\gamma}_1,\boldsymbol{D}^{(1)}) \nonumber, \ \ \ \mathcal{T}_2 = \mathcal{T}_2(\boldsymbol{\gamma}_2,\boldsymbol{D}^{(2)}, \cdot) \nonumber
\end{align}
where $\mathcal{T}_2$ can additionally depend on $\boldsymbol{\gamma}_1$ and $\boldsymbol{D}^{(1)}$. This section presents two priors for $\pi(\gamma_1,\gamma_2)$ that account for dependencies in the medoids sets and, thus, on the ensuing partitions. Lastly, we consider a dependent random partition model $\pi(\mathcal{T}_1,\mathcal{T}_2)$ where the marginal distributions $\pi(\mathcal{T}_1)$ and $\pi(\mathcal{T}_2)$ are given by the EPPF in \eqref{eq:EPPF}.


\subsection{Tesselation models for nested partitions}\label{section:nested_partition}



Let us consider the cluster allocation labels $\mathbf{z}_i = \{ z_{i1}, \dotsc, z_{iN}\}$ that specify the clusters of each object at layer $i$, where if $z_{ij}=k$ then object $j$ belongs to the $k$th cluster of layer $i$ ($j \in \mathcal{C}_{i,k}$). Let us also consider $\boldsymbol{\gamma}_i = \{\gamma_{i1}, \dotsc, \gamma_{iK_i}\}$, which contain the indexes of the $K_i$ medoids at layer $i$ (the order of this indexes is arbitrary).

\begin{definition}\label{def:nested}
In a nested partition model for two layers, two objects belong to the same cluster at layer two only if they belong to the same cluster at layer one.  Formally,
$$z_{2i} = z_{2j} \implies z_{1i} = z_{1j}, \ \ \ \forall \ i,j\in [N]$$
where the reverse implication is not assumed.
\end{definition}
The previous definition leads to partitions where the number of clusters in the second layer is always larger than in the first layer, and the clusters of the second layer are ``inside" the clusters of the first layer (see Figure \ref{fig:subpartition} for a representation in continuous space).

We now consider a natural extension of the Voronoi tesselation models of Section \ref{sect:tesselation} that lead to nested partitions. For the cluster allocation labels of the first layer, we proceed as in Section \ref{sect:tesselation}. The label of object $j$ corresponds to the index of the closest medoid:
\begin{equation}\label{eq:nested1}
z_{1j}(\boldsymbol{\gamma}_1,\mathbf{D}^{(1)}) = \argmin_{i \in [K_1]} D^{(1)}_{\boldsymbol{\gamma}_1(i), \ j}
\end{equation}

For instance, take $\boldsymbol{\gamma}_1 = \{2,5\}$ (2 medoids) and the first row of the distance matrix $\mathbf{D}^{(1)}_{1,\cdot} = [0, 3, 1, 10, 4]$. Then, the first object is closest to the medoid object with index $2$ (first medoid), so $z_{11}=1$. We proceed similarly for layer 2, but instead of searching for the minimum distance between object $j$ and all of the medoids of the second layer $\boldsymbol{\gamma}_2$, the optimization considers only the medoids in $\boldsymbol{\gamma}_2$ also belonging to $\mathcal{C}_{1,z_{1,j}}$, the cluster at layer one that object $j$ belong to:
\begin{equation}\label{eq:nested2}
    z_{2j}(\boldsymbol{\gamma}_2,\mathbf{D}^{(2)},\mathcal{T}_1) = \argmin_{l \ \in \ \boldsymbol{\gamma}_2 \cap \mathcal{C}_{1,z_{1,j}}} D^{(2)}_{lj}
\end{equation}
The only restriction on the medoids sets $\boldsymbol{\gamma}_1$ and $\boldsymbol{\gamma}_2$ is that $ \boldsymbol{\gamma}_2 \cap \mathcal{C}_{1,z_{1,j}}$ should not be an empty set, 
and so at least one object in $\mathcal{C}_{1,k}$ should be a medoid in the second layer. Figure \ref{sect:tesselation} shows an example of a nested partition structure resulting from the tessellation of medoids $\boldsymbol{\gamma}_1$ and $\boldsymbol{\gamma}_2$ in continuous space. For the first layer, objects in regions A and B belong to the same cluster as the medoid located at (1.5, 2.5). In the second layer, region A is linked to the medoid at (0.5, 2.5), while region B is linked to the medoid at (2, 1.5). 

The final model is given by \eqref{eq:joint_partition}, where \eqref{eq:nested1} and \eqref{eq:nested2} define the partition models $\mathcal{T}_1 = \mathcal{T}_1(\boldsymbol{\gamma}_1,\mathbf{D}^{(1)})$ and $\mathcal{T}_2 = \mathcal{T}_2(\boldsymbol{\gamma}_2,\mathbf{D}^{(2)},\mathcal{T}_1)$ and we consider independent priors for $\boldsymbol{\gamma}_1$ and $\boldsymbol{\gamma}_2$ following equation \eqref{eq:prior}. Appendix \ref{sect:Algnested} discusses extending the MCMC algorithm of the tesselation models of Section \ref{sect:tesselation} to perform inference on these models. 

\begin{figure}[h]
    \centering
    \includegraphics[width=0.60\linewidth]{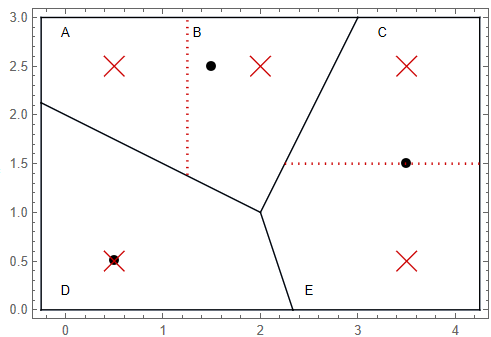}
    \caption{Example of a nested partition structure. The black lines delimit the three regions of the first layer. Both the black lines and red dotted lines delimit the five regions of the second layer. The black dots represent the medoids in the first layer, while the red crosses represent the medoids of the second layer.}
    \label{fig:subpartition}
\end{figure}

\subsection{Joint tesselation models}\label{section:joint_partition}

The previous nested tesselation model is restrictive in that the first-layer partition structure restricts the cluster allocations in the second layer. Two objects cannot belong to the same cluster in the second layer unless they also belong to the same cluster in the first layer. Consequently, the number of clusters in the second layer is always larger than in the first. Thus, we consider now more flexible dependent tessellation models.


Intuitively, we aim to include a dependence parameter \(\alpha \in [0,1]\) in our model, so that when \(\alpha = 0\), \(\boldsymbol{\gamma}_1\) is independent of \(\boldsymbol{\gamma}_2\), and as \(\alpha\) increases, the partitions \(\mathcal{T}_1\) and \(\mathcal{T}_2\) become increasingly similar. The joint prior can be expressed as \(\pi(\boldsymbol{\gamma}_1, \boldsymbol{\gamma}_2 | \alpha) = \pi(\boldsymbol{\gamma}_2 | \boldsymbol{\gamma}_1, \alpha) \pi(\boldsymbol{\gamma}_1)\), where \(\pi(\boldsymbol{\gamma}_2 | \boldsymbol{\gamma}_1, \alpha)\) penalizes the ``distance" between the induced partitions \(\mathcal{T}_1\) and \(\mathcal{T}_2\).

Previous work, such as that by \cite{dahl2023dependent}, introduces the concept of shrinking a random partition toward a reference partition. \cite{paganin2021centered} multiply the partition prior \(\pi(\mathcal{T}_1)\) by a penalization term leading to the conditional density \(\pi(\mathcal{T}_2 | \mathcal{T}_1, \phi) \propto \pi(\mathcal{T}_1) \exp(-\phi d(\mathcal{T}_1, \mathcal{T}_2))\), where \(d\) quantifies the distance between partitions. However, these models apply to partitions and not medoid sets. Also, in the case of the random partitions in \cite{paganin2021centered}, the normalization constant of $\pi(\mathcal{T}_2 | \mathcal{T}_1, \phi)$ which is a function $\mathcal{T}_1$  and $\phi$, is intractable, leading to the necessity of fixing parameters or relying on computationally expensive MCMC algorithms for doubly intractable problems.

To model the medoid set directly and maintain scalability while penalizing the distance between induced partitions, we adopt a generalized Bayes perspective \citep{bissiri2016general, miller2021asymptotic}. \cite{bissiri2016general} provide a coherent framework for general Bayesian inference where the parameters of interest are related to observations via a loss function instead of the traditional likelihood function, which can be too restrictive. For multi-view data, we start by defining the loss function
\begin{equation*}
l(\boldsymbol{\gamma}_1, \boldsymbol{\gamma}_2, \alpha) = -\log\pi(\mathbf{D}^{(1)} | \boldsymbol{\gamma}_1) - \log\pi(\mathbf{D}^{(2)} | \boldsymbol{\gamma}_2) + \phi(\alpha) d(\boldsymbol{\gamma}_1, \boldsymbol{\gamma}_2; \mathbf{D}^{(1)}, \mathbf{D}^{(2)})
\end{equation*}
where the distance measure is given by
\begin{equation*}
d(\boldsymbol{\gamma}_1, \boldsymbol{\gamma}_2; \mathbf{D}^{(1)}, \mathbf{D}^{(2)}) = \frac{1}{\text{RI}(\mathcal{T}_1(\boldsymbol{\gamma}_1; \mathbf{D}^{(1)}), \mathcal{T}_2(\boldsymbol{\gamma}_2; \mathbf{D}^{(2)}))} - 1
\end{equation*}
Here, RI stands for the Rand Index. When \(\mathcal{T}_1 = \mathcal{T}_2\), the RI is 1, resulting in a distance of 0. Conversely, if the partitions do not agree on any object, then RI is 0, and the distance becomes infinite. We employ the transformation \(\phi(\alpha) = \alpha / (1 - \alpha)\) to ensure that \(\alpha \in [0,1]\). When \(\alpha \to 0\), no penalty is induced, whereas larger values of \(\alpha\) lead to larger penalization. 
A valid and coherent update for the model parameters is
\begin{align}\label{eq:gibbsposterior}
    &\Tilde{\pi}(\boldsymbol{\gamma}_1,\boldsymbol{\gamma}_2, \alpha| \mathbf{D}^{(1)}, \mathbf{D}^{(2)}) = \frac{\exp(-l(\boldsymbol{\gamma}_1,\boldsymbol{\gamma}_2, \alpha))\pi(\boldsymbol{\gamma}_1,\boldsymbol{\gamma}_2, \alpha)}{\int \exp(-l(\boldsymbol{\gamma}_1,\boldsymbol{\gamma}_2, \alpha)) \pi(\boldsymbol{\gamma}_1,\boldsymbol{\gamma}_2, \alpha) d\boldsymbol{\gamma}_1 d\boldsymbol{\gamma}_2 d\alpha} \\
    &\propto \pi(\mathbf{D}^{(1)}|\boldsymbol{\gamma}_1)\pi(\mathbf{D}^{(2)}|\boldsymbol{\gamma}_2)\exp(-\phi(\alpha)d(\boldsymbol{\gamma}_1,\boldsymbol{\gamma}_2;\mathbf{D}^{(1)},\mathbf{D}^{(2)}))\pi(\boldsymbol{\gamma}_1)\pi(\boldsymbol{\gamma}_2) \pi(\alpha) \nonumber
\end{align}
where the tilde in $\Tilde{\pi}$ is meant to distinguish the previous posterior density from the more traditional ones based on a likelihood specification. Appendix \ref{sect:Algjoint} provides the MCMC algorithm to perform inference, where we show that for the prior $\alpha \sim \text{Beta}(a,b)$ we can marginalize out $\alpha$ and model $\Tilde{\pi}(\boldsymbol{\gamma}_1,\boldsymbol{\gamma}_2| \mathbf{D}^{(1)}, \mathbf{D}^{(2)})$ directly. 

\subsection{Stationary dependent random partitions models with EPPFs}\label{sect:jointPY}

We will compare the previous Bayesian tessellation models with known dependent random partition models in the Bayesian non-parametric literature. We emphasize that these models directly perform inference on the partition instead of the medoid set. In \cite{page2022dependent}, $\mathcal{T}_1$  and $\mathcal{T}_2$ are made dependent by considering the auxiliary variables $\boldsymbol{\kappa} = \{ \kappa_1, \dotsc \kappa_n\}$ and the following relationship $\pi(\mathcal{T}_1,\mathcal{T}_2|\boldsymbol{\kappa}) = \pi(\mathcal{T}_2|\mathcal{T}_1,\boldsymbol{\kappa})\pi(\mathcal{T}_1)$,
where $\pi(\mathcal{T}_1)$ is an EPPF. If $\kappa_i=1$ then $z_{1i}=z_{2i}$, i.e., object $i$ is assigned to the same cluster in both layers, while if $\kappa_i=0$ then object $i$ is subject to reallocation at layer 2 (it may still end in the same cluster as in layer 1). The prior $\kappa_i \overset{i.i.d.}{\sim}\text{Bernulli}(\alpha)$ is assigned, where if $\alpha=0$, then $\mathcal{T}_1$ is independent from $\mathcal{T}_2$ and if $\alpha=1$ then $\mathcal{T}_1=\mathcal{T}_2$ with probability 1.

The previous construction leads to stationarity, in the sense that the marginal distribution of $\mathcal{T}_2$ is the same as $\mathcal{T}_1$ and is thus described by the same EPPF. We chose the Pitman-Yor EPPF in \eqref{eq:EPPF} and employed a Gibbs sampler to perform inference, which is discussed in  Appendix \ref{sect:AlgEPPF}. 

\section{Simulation study}\label{sect:simulations}

\todo[inline]{Three problems: a) joint models not performing better than independent one, second problem b) joint tesselation models not recovering alpha=0 (but they do so for alpha=1), what I get is not interpretable at all... c) joint tesselation model slower than PY (some problem happened during simulations) See images in Images/Unused images}

The goal of the simulations is to compare the Bayesian models based on the tesselation of Sections \ref{sect:tesselation}, \ref{section:nested_partition}, and \ref{section:joint_partition} with Bayesian non-parametric PY models and K-medoid algorithms. First, we describe the models and then the simulated data.

We consider the four likelihood configurations in Table \ref{tab:comp_complexity}, and for the PY models, we can only consider the ``quadratic" (there are no medoids). We fit a joint model with likelihood given by \eqref{eq:joint_partition} and consider a direct or induced prior on the partitions given next: 
\begin{enumerate}
    \item \textbf{PY independent}: Independent Pitman-Yor (PY) EPPF prior on each partition, employing a Gibbs sampler.

    \item \textbf{PY joint}: Stationary and dependent prior for the two partitions with a PY marginal, as described in Section \ref{sect:jointPY}.
    
    \item \textbf{Tessellation independent}: Partitions induced by independent tessellation for each layer, with the MCMC algorithm described in Algorithm 2.
    
    \item \textbf{Tessellation nested}: Nested partitions induced by Bayesian tesselation models, as described in Section \ref{section:nested_partition} and Algorithm 3.
    \item \textbf{Tessellation joint}: Dependent partitions induced by joint tessellation models, as described in Section \ref{section:joint_partition} and Algorithm 4.
\end{enumerate}
Finally, we also compare the previous models with the standard K-medoid algorithms. We run the K-medoid algorithms available in the R package \emph{cluster} \citep{maechler2019finding}, fixing the number of clusters to the true value determined by a numismatic expert.

The MCMC algorithms for the previous models are described in Appendix \ref{sect:algorithms}. For the PY partition models, we consider fixed $M=1$ and $\theta=0.01$.  We present results only for the PAM and CLARA algorithms. For the models based on tesselations, we run each algorithm for 10,000 iterations, discarding the first  2,500 iterations as burn-ins, and for the PY models, we run the algorithms for 3,000 iterations with 1,000 burn-ins. To evaluate predictive performance, we computed the co-clustering matrices for each layer $i$ whose entries  $s_{jk}^{(i)}$ are given by $s_{jk}^{(i)}= P(z_{ij} = z_{ik} | \mathbf{D}^{(1)},\mathbf{D}^{(2)})$, which are estimated from the MCMC output. The cluster estimate is obtained through the SALSO algorithm \citep{dahl2022search} by minimizing the posterior expectation of the variation of information distance between the posterior samples and the cluster estimate \citep{dahl2022search}.

We consider simulated multi-view data with two layers to evaluate the performance of the six models. For both layers, we generate 100 points from a mixture of ten multivariate Gaussian distributions with cluster centers at the vertices of a standard 10-simplex and covariance matrices $\sigma_s^2 I_d$, for $\sigma_s\in\{0.1,0.2,0.3\}$ (larger $\sigma_s$ means more cluster overlap). The cluster allocations are drawn from a Dirichlet prior with $M = 10$ for the first layer. Then, we randomly select  $\lfloor 100\alpha_s \rfloor$ objects for the second layer to have the same cluster allocations as the first layer for $\alpha_s \in \{0,0.25,0.5,0.75,1\}$ (larger $\alpha_s$ means more similarity between layers). Finally,  the cluster allocations for the remaining $100- \lfloor 100\alpha_s \rfloor$ of the second layer is a random permutation of the cluster allocations of the first layer. For each simulation scenario, we consider 100 replicates.

We first focus on the independent Pitman-Yor (PY) model, independent tessellation models, and K-medoids. Figure \ref{fig:mari} illustrates the median ARI between the true clustering and the cluster estimates for these models across various values of $\sigma_s$ and likelihood configurations. There are no substantial differences across varying  $\alpha_s$ since the models assume independent partitions. As anticipated, increased cluster overlap (higher $\sigma_s$) results in poorer retrieval of the clustering structure. However, the key findings are as follows: a) overall, the Bayesian models significantly outperform K-medoids (PAM and CLARA), b) the inclusion of a repulsion term in the likelihood function enhances identifiability for all Bayesian models, and c) tessellation models with a ``linear" likelihood perform only slightly worse compared to tessellation models with a ``double" likelihood. Notably, the independent tessellation model with linear likelihood and repulsion term stands out, demonstrating effective clustering with linear computational complexity in $N$ per MCMC iteration. Table \ref{tab:met} presents the median elapsed times.

\begin{figure}[h]
    \centering
    \includegraphics[width=0.85\linewidth]{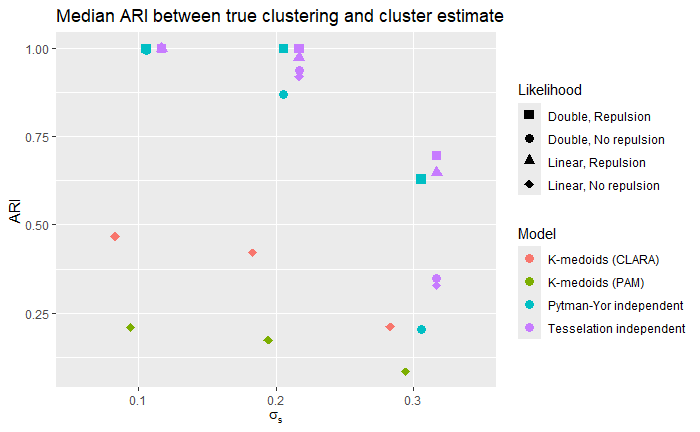}    
    \caption{Median ARI between the true clustering and cluster estimates for different $\sigma_s$, models and likelihood configurations.}
    \label{fig:mari}
\end{figure}

\begin{table}[h]
\centering
\begin{tabular}{l|cc|cccc|c}
                \multicolumn{1}{c}{} & \multicolumn{2}{c}{PY indep.} & \multicolumn{4}{c}{Tesselation indep.} & K-medoids   \\ \hline
Likelihood      & Double        & Double        & Double    & Double    & Linear   & Linear   & Linear       \\
Repulsion       & Yes           & No            & Yes       & No        & Yes      & No       & No           \\
Median Time (s) & 856           & 286           & 6         & 5         & 3        & 3        & $< 1 $ \\ \hline
\end{tabular}
\caption{Median elapsed time for different models and likelihood configurations. K-medoids stands for both the PAM and CLARA algorithms.}
\label{tab:met}
\end{table}


Figure \ref{fig:mari} shows similar results for the joint models.  Our key observations are as follows: a) The joint tessellation models perform comparably to or better than the joint Pitman-Yor models, b) within the joint models, the ``quadratic" likelihood configuration with repulsion term yields superior results, and c) the nested tessellation model exhibits poorer performance for smaller values of $\alpha_s$. This decline in performance is attributed to the fact that the clustering structure of the simulated data is not nested when $\alpha_s < 1$, causing the second layer to be inadequately represented.

\begin{figure}[h]
    \centering
    \includegraphics[width=0.8\linewidth]{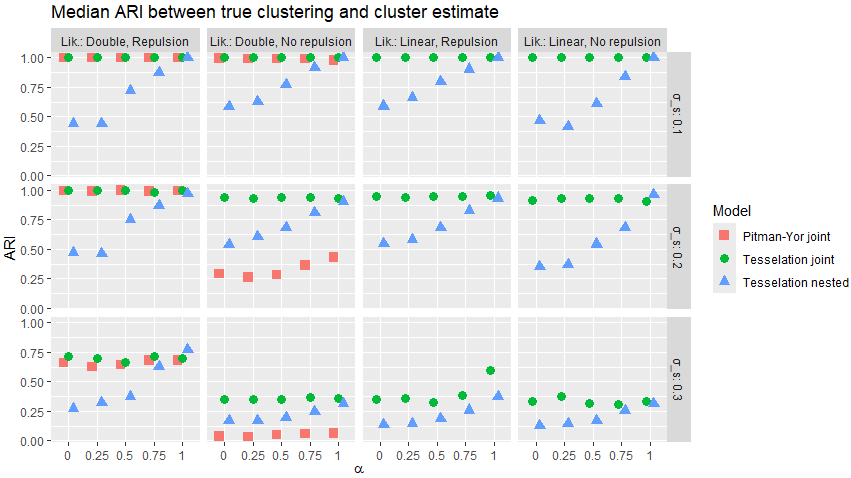}    
    \caption{Median ARI between the true clustering and cluster estimates for different $\sigma_s$, models and likelihood configurations.}
    \label{fig:ARI_joint}
\end{figure}

\section{Numismatics application}\label{section:numismatics}


\begin{figure}[h]
    \centering
    \includegraphics[width=0.39\linewidth, valign=c]{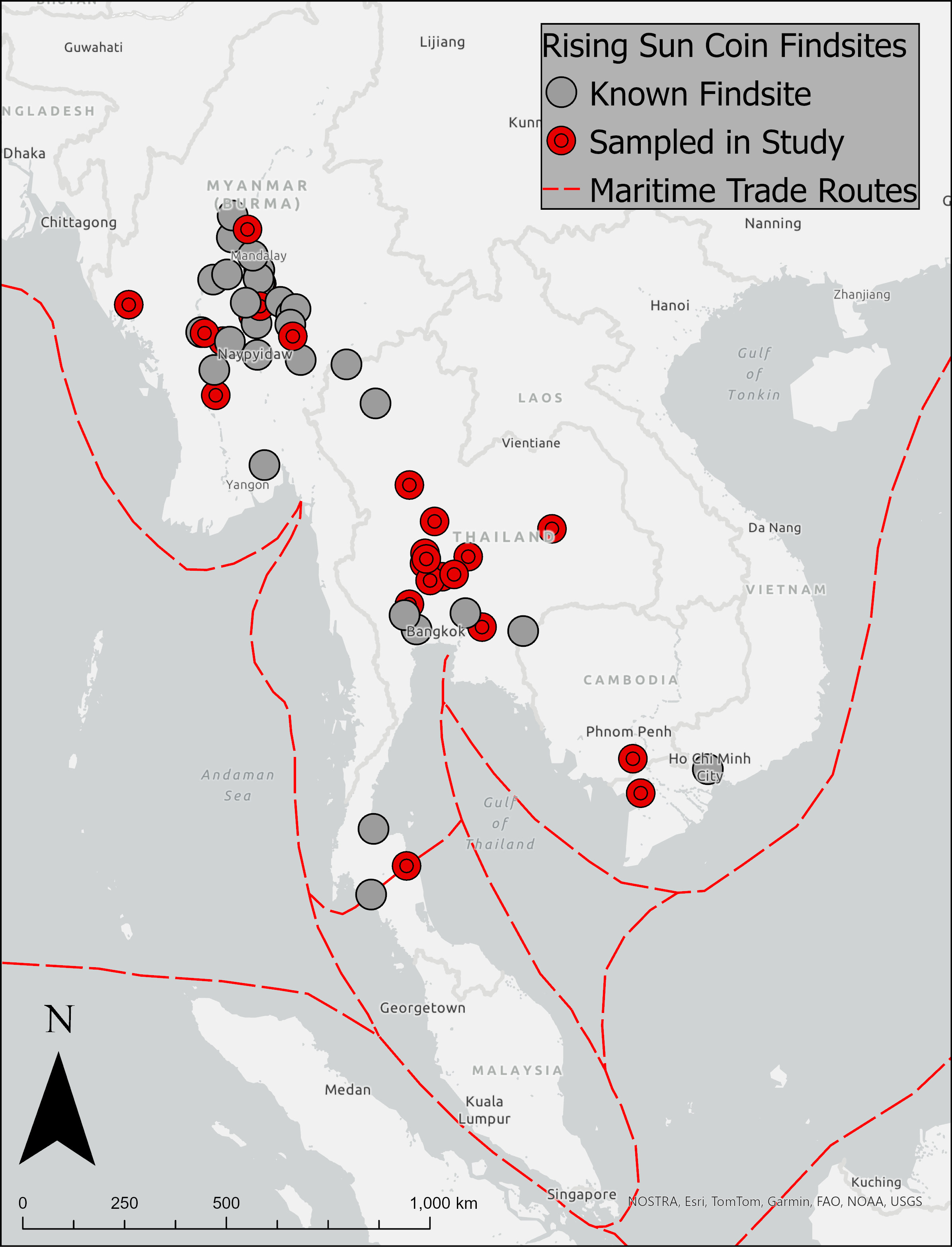}\hfill
    \includegraphics[width=0.60\linewidth, valign=c]{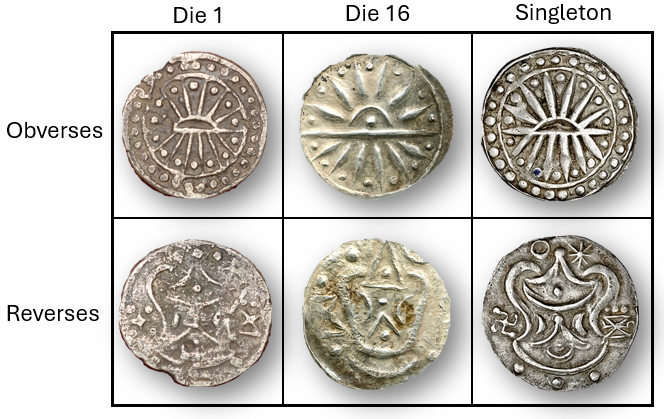}    
    \caption{Map with findsites and trade routes (left) and photographs of coin obverses and reverses for several identified dies (right).}
    \label{fig:coins}
\end{figure}

 \cite{natarajan2023cohesion} and \cite{harris2023past} emphasize the necessity and importance of scalable clustering methods in die studies. A die is a stamp often made of a metal piece, used to strike impressions on heated metal to form a recognizable coin. Die studies are essential for determining the number of dies used to mint a specific issue of coinage. Since dies were typically destroyed by their makers after wearing out, die studies rely on analyzing the coins produced by them. Thus, the primary task is clustering coins to identify those struck from the same die. This process is indispensable for understanding premodern historical chronology, economic conditions, including estimating mint outputs, and political history, especially in cases where corresponding historical sources are abundant, such as ancient Rome \citep{howgego2002ancient}. Numismatic research remains time-consuming, as each coin must be visually compared to every other coin to confirm die matches. For instance, the 521 coins obverses and reverses in this application would require about 550000 visual comparisons, which could require expert numismatics about 800 hours of visual inspection. This complexity hinders large-scale die studies, and thus, automatic workflows for clustering coins into different dies are needed (see Figure \ref{fig:workflow} for a visualization of the workflow). 



Figure \ref{fig:coins} shows photographs of the obverse and reverse of 98\% silver coins, which were originally minted in the Pyu-Mon polities of northern and central Myanmar between the 4th and 9th centuries AD \citep{wicks1992money}. A total of 521 coins were collected. The obverse of each coin depicts the motif of a ``Rising Sun" surrounded by a border of 27 zodiac-inspired beads, while the reverse layers an Indic symbol known as a Srivatsa and other supporting symbols such as a swastika, bhadrapittha (royal throne), and celestial symbols \citep{gutman1978ancient}. ``Rising Sun" coins in full-unit ($\approx 9.2-9.4\text{g}$) and smaller denominations have been discovered in 1st millennium AD settlements across mainland and peninsular Thailand, the Mekong Delta areas of Cambodia and Vietnam, and numerous sites throughout Myanmar \citep{mahlo2012early}. Regional variations in size, weight, and quality, despite the relative uniformity of the imagery, are not only thought to reflect centralized production but also local replication and the use of these coins in interregional trade. Figure \ref{fig:coins} illustrates known Rising Sun coin find-sites in relation to hypothesized maritime trade routes, while also highlighting the specific find-sites from which the coins used in this study were sourced. The coins were found in Myanmar, Cambodia, Thailand, and Vietnam. The majority of these coins are held in museum collections in each country \citep{Epinal2014Cambodia, galloway2022sri, malleret1962archeologie, onwimol2019coinage}, but have also been sourced from private collections and auction houses in Europe and North America, where provenance is not as reliable \citep{mahlo2012early,mitchiner1998history}.


\begin{figure}[htp]
    \centering
    \includegraphics[width=1\linewidth]{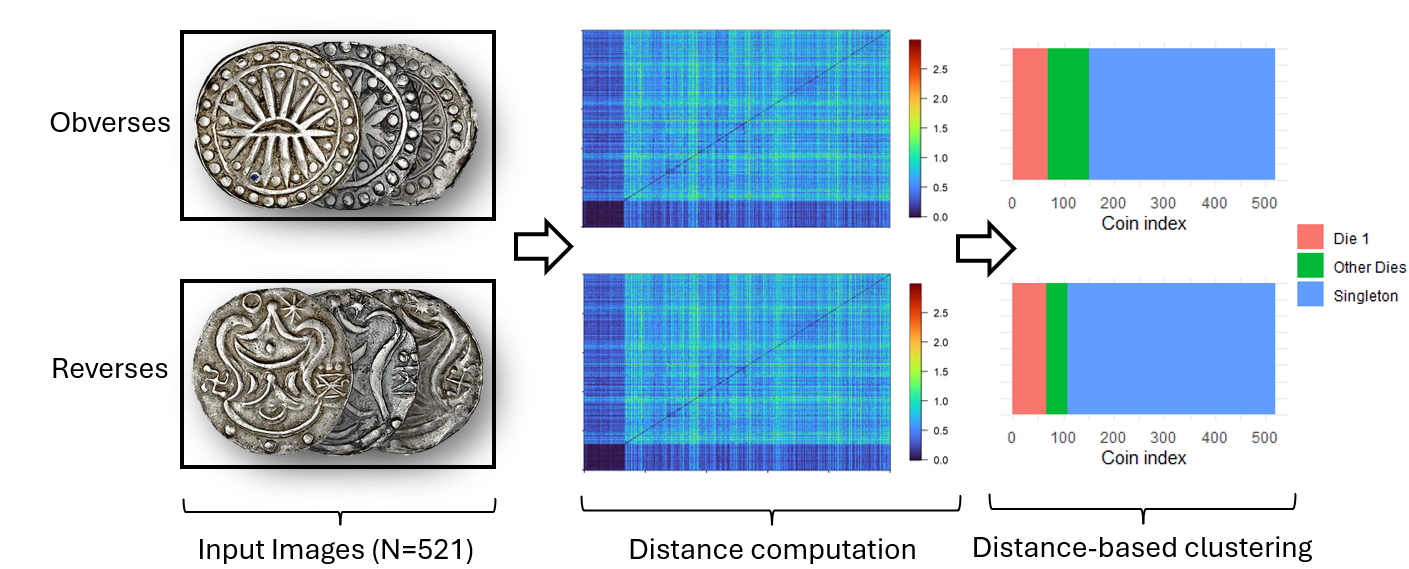}
    \caption{Workflow consisting of image prepossessing, computing the distance matrix between coins, and clustering the coins based on the distances (the cluster estimates on the right are obtained from the tesselation model with nested partitions). }
    \label{fig:workflow}
\end{figure}


\subsection{Computing pairwise distances}

The first challenge is computing the pairwise distances between the coins' obverse and between the coins' reverse based on the photographic images. These distances must capture the differences in key features that allow distinguishing between different dies. We extend the workflows developed in \cite{natarajan2023cohesion} and \cite{harris2023past} for image preprocessing and distance matrix computation. Many coins show advanced degradation, such as rust, erosion, and missing parts, leading to large distances between coins minted by the same die. Additionally, many coins belong to singletons (die clusters with only one coin), many of which are similar to each other, thus resulting in small pairwise distances between them. These challenges led to the development of a new workflow for image preprocessing and distance computation, which builds upon the one in \cite{harris2023past} but includes several additional steps. The workflow consists of nine steps, detailed in Appendix \ref{app:workflow}. Besides the number of key point matches between pairs of coins using the D2-Net \citep{dusmanu2019d2}, we also compute local self-similarity descriptors centered on different points between pairs of coins \citep{shechtman2007matching} and a variety of other distance and similarity measures computed directly between the coin images. These include cosine distance, Wasserstein distance, and structural similarity index. Finally, we perform a weighted combination of these distances and similarity measures to calculate the final distance between coin pairs. The distance matrices are shown in Figure \ref{fig:workflow}. 





\subsection{Results} \label{sect:results}
\todo[inline]{Is there a fair way to compare running times of the different algorithms/models?}
\todo[inline]{Any advantage to joint modelling for tesselations? }
\todo[inline]{The joint PY performs better than independent PY, but not the case for tesselation models, independent tesselation already high}
\todo[inline]{Add more models for the comparison? add state of the art ML models...}

Compared to \cite{natarajan2023cohesion} and \cite{harris2023past}, we cluster the coins by eliciting a joint model for the coin obverses (layer 1) and reverses (layer 2) distances, while the previous studies only considered the coin obverse distances. 

We always consider the likelihood specification with repulsion term as in \eqref{eq:mainequation}, which leads to quadratic computational complexity, and fitted the six different models of Section \ref{sect:simulations}. As detailed in Appendix \ref{sect:numismatics_figures}, we preselect 284 coins as being singleton coins and thus only perform inference on the remaining 237 coins. The criteria for preselecting a coin as a singleton is whether the 1\% quantile of distances between the coin and the remaining coins is larger than 0.15, indicating that the coin differs from most other coins. Figure \ref{fig:Rev_coclustering} shows the co-clustering matrices for the coin reverses for the models with dependent partition priors.

Figure \ref{fig:figure_numismatics} compares the models based on their running time and the adjusted Rand index (ARI), which measures how well the estimated clusters match the clustering structure obtained from an expert numismatist. The findings indicate that tessellation-based models, when using medoids for inference, can perform similarly to traditional Bayesian random partition models. However, they require significantly less time, reducing the running time from about five days to just a few minutes. In contrast, although extremely fast and taking only a fraction of a second to run, the K-medoids model performs poorly in estimating the true clustering structure. It achieves a low ARI, around 0.2 to 0.3, compared to the higher ARI values of approximately 0.8 to 0.9 achieved by the other models. Additional figures and results are provided in \mbox{Appendix \ref{sect:numismatics_figures}}.


About 13\% of the coins belong to the same die, which we call Die 1; 14\%  of coins belong to other dies, which mostly contain two coins only, and the remaining 73\% of coins belong to die clusters with only one coin (singleton). Coins from Die 1 have now been confirmed to have been cast rather than struck, a detail not initially obvious from visual inspection. Die 1 coins are found exclusively in the 4th-7th century AD at the urban center of Angkor Borei in southern Cambodia, once considered a major capital center of the so-called 1st-6th century AD ``Kingdom of Funan” \citep{Epinal2014Cambodia}. While both cast and struck coins have been found at Angkor Borei, the interconnected port of Oc Eo in southern Vietnam, a known hub of transregional trade up to the 6th century, features exclusively struck coins amongst its known archaeological remains \citep{malleret1962archeologie}. The local imitation of silver coins at Angkor Borei thus indicates the integration of this polity and its elite into a transregional trade system that valued and exchanged weighted silver coinage as either bullion or currency. The wealth reflected in coinage at Angkor Borei further indicates the incorporation of the local population into the broader economic systems of the period.

\begin{figure}[!ht]
    \centering
    \begin{subfigure}[b]{0.45\textwidth}
        \centering
        \includegraphics[width=\textwidth]{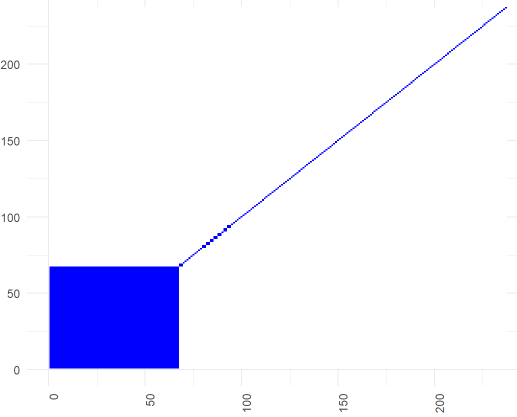}
        \caption{Adjacency matrix of true clustering}
        \label{fig:sub1}
    \end{subfigure}
    \hfill
    \begin{subfigure}[b]{0.45\textwidth}
        \centering
        \includegraphics[width=\textwidth]{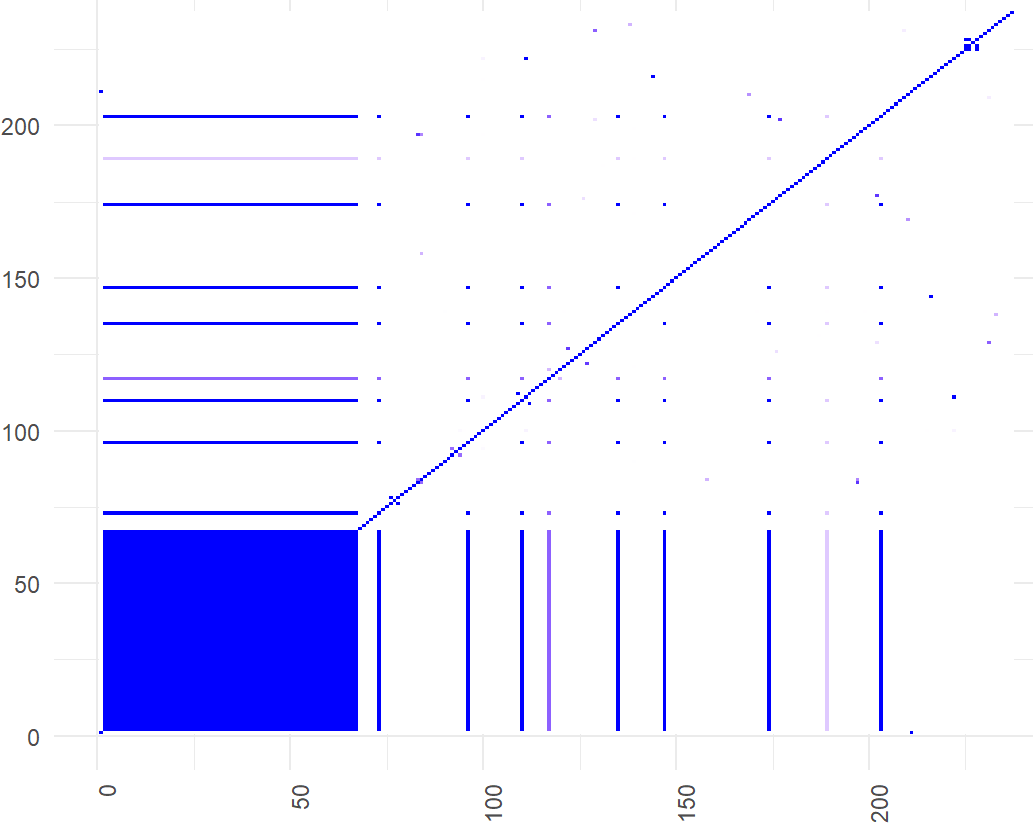}
        \caption{PY joint}
        \label{fig:sub2}
    \end{subfigure}
    \vfill
    \begin{subfigure}[b]{0.45\textwidth}
        \centering
        \includegraphics[width=\textwidth]{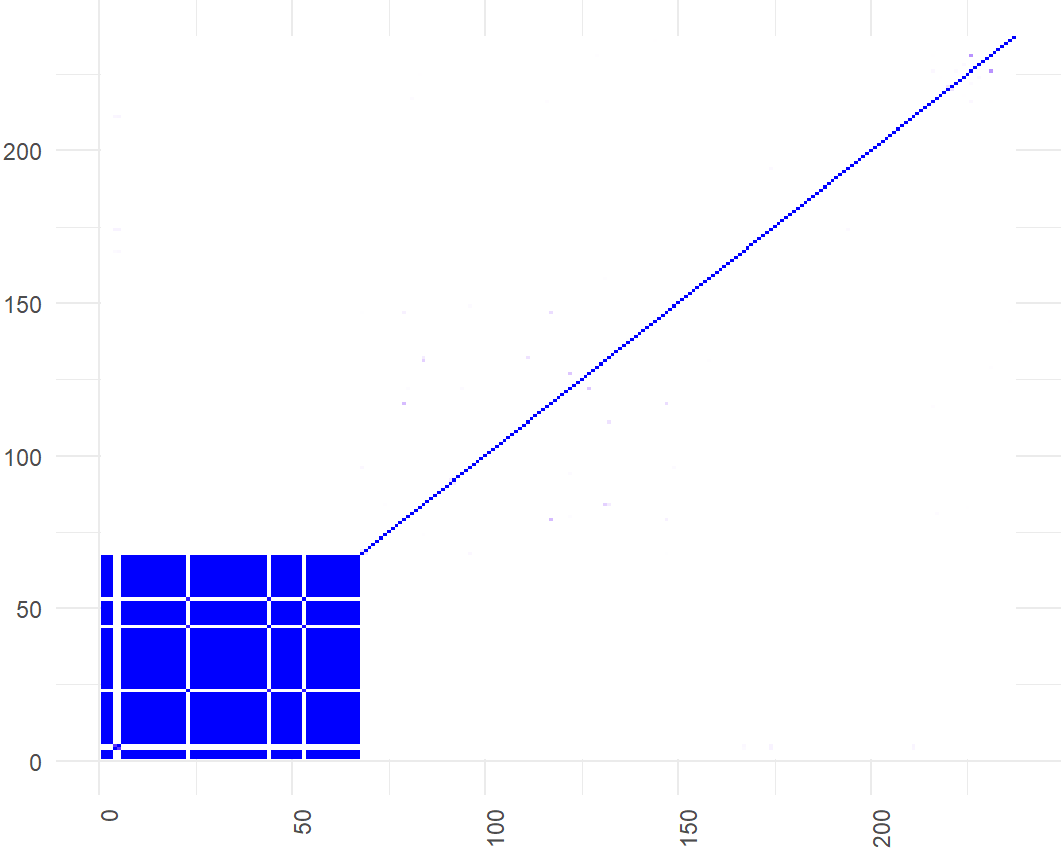}
        \caption{Nested Tesselation}
        \label{fig:sub3}
    \end{subfigure}
    \hfill
    \begin{subfigure}[b]{0.45\textwidth}
        \centering
        \includegraphics[width=\textwidth]{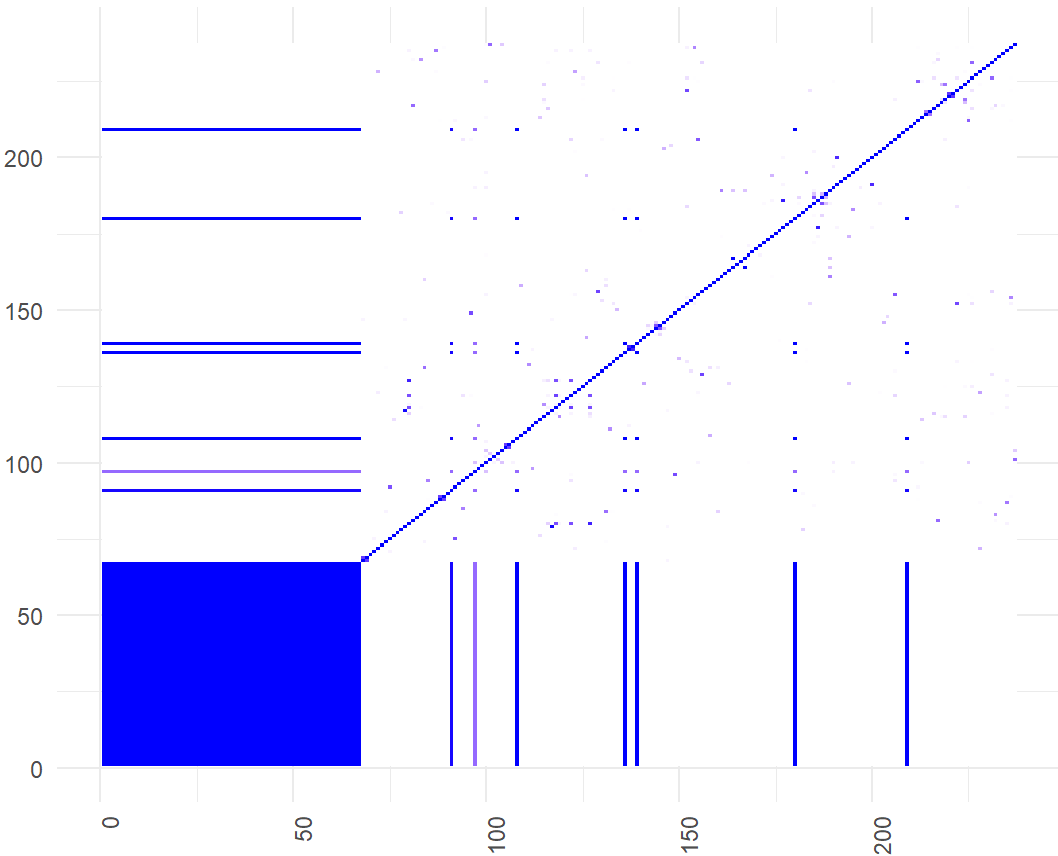}
        \caption{Joint Tesselation}
        \label{fig:sub4}
    \end{subfigure}
    \caption{Posterior co-clustering matrices for selected 237 non-singleton coin reverses.}
    \label{fig:Rev_coclustering}
\end{figure}





\begin{figure}[!ht]
    \centering
    \includegraphics[width=0.8\linewidth]{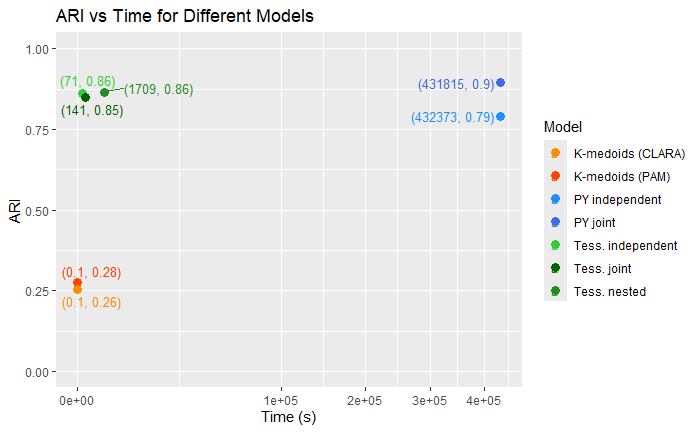}    
    \caption{Adjusted Rand Index (ARI) measuring goodness-of-fit against the running time of the different algorithms (in square root scale). The reported ARI is the average of $\text{ARI}_{\text{OBV}}$ and $\text{ARI}_{\text{REV}}$ where $\text{ARI}_{\text{OBV}}$ ($\text{ARI}_{\text{REV}}$) is the average ARI obtained between the obverses (reverses) cluster estimates and true obverse (reverse) clustering.}
    \label{fig:figure_numismatics} %
\end{figure}

\section{Discussion}\label{sect:discussion}

The Bayesian distance clustering (BDC) models with partitions based on tesselations we propose in this paper can be seen as a Bayesian extension of K-medoids, where we perform probabilistic inference on the medoids set. This approach offers significant advantages over traditional Bayesian non-parametric models and traditional K-medoid algorithms. The models and corresponding MCMC algorithms presented in this work allow for more scalable inference compared to previous BDC models and other probabilistic Bayesian clustering methods while offering results with similar predictive performance. Additionally, unlike K-medoids, which provide point estimates, our Bayesian models enable the exploration of the entire posterior distribution of the medoids, leading to more robust and interpretable clustering results with better predictive power in complex datasets. Thus, these models offer a trade-off between predictive power and computational efficiency in clustering methods.

In the field of numismatics, BDC models show great promise. Automatic tools for coin clustering are crucial for historical and economic studies, as they provide insights into minting practices and trade networks. Our models significantly reduce the time and effort required for these studies, which traditionally involve labor-intensive visual comparisons. Future work consists of performing object detection on the coins to detect symbols such as celestial objects, which are characteristic of different regions and times, thus helping to distinguish different dies. Incorporating the results of the object detection algorithms into the distance matrix is a future challenge. 

Future research directions also include the development of parallel strategies and informed MCMC proposals to enhance computational efficiency in exploring the medoid set. The MCMC proposals we employ are considered ``blind" as they do not utilize the information in $\pi(\mathbf{D},\boldsymbol{\gamma})$. To obtain better proposal distributions and improve MCMC mixing, we could consider locally-balanced informed proposals \citep{zanella2019informed}. Informed proposal algorithms are also highly parallelizable and can potentially improve our current implementation's computational efficiency and posterior exploration. 

\section{Acknowledgments}
This research was supported by the National University of Singapore through the grant MOE-T2EP40121-0021.

\bibliography{bib.bib}

\clearpage

\begin{appendices}

\section{Marginal likelihood conditional on partition model} \label{sect:marginallik}

We can integrate out the parameters $\boldsymbol{\theta}$ and $\boldsymbol{\lambda}$ in $\pi\left(\boldsymbol{D} \mid \boldsymbol{\theta}, \boldsymbol{\lambda}, \mathcal{T}\right)$ and avoid performing inference on these terms. First, we note that, for random variables $X_i$ and $i=1,\dotsc,N$, if $X_i| \lambda \overset{i.i.d.}{\sim} \text{Gamma}(\delta,\lambda)$, and $\lambda\sim\text{Gamma}(\mu,\beta)$, then

$$
\int \prod_{i=1}^n \pi(x_i|\lambda)\pi(\lambda)d\lambda = \frac{\Gamma(\delta)^{-n} \Gamma(\mu+n\delta)}{\Gamma(\mu)} \beta^\mu \left( \prod_{i=1}^n x_i\right)^{\delta-1} \left( \beta + \sum_{i=1}^n x_i\right)^{-\mu-\delta n}
$$

Therefore, $ \pi\left(\boldsymbol{D} \mid \mathcal{T}\right)=
\int \pi\left(\boldsymbol{D} \mid \boldsymbol{\theta}, \boldsymbol{\lambda}, \mathcal{T}\right)\pi(\boldsymbol{\lambda})\pi(\boldsymbol{\theta})d\boldsymbol{\lambda}d\boldsymbol{\theta}
$ simplifies to

\begin{align}\label{eq:Datasimple}
    \pi\left(\boldsymbol{D} \mid \mathcal{T}\right)&=\left[\prod_{k=1}^K \int \left(\prod_{\substack{i, j \in \mathcal{C}_k \\ i<j}}  f\left(D_{i j} \mid \lambda_k\right) \right) \nonumber \pi(\lambda_k)d\lambda_k\right]\left[\prod_{(k, t) \in A} \int \left(\prod_{\substack{i \in \mathcal{C}_k \\ j \in \mathcal{C}_t}}  g\left(D_{i j} \mid \theta_{k t}\right)\right) \pi(\theta_{k t}) d\theta_{k t}\right] \\ \nonumber
    &= \prod_{k=1}^K \mathcal{L}_k^{(1)} \prod_{(k, t) \in A} \mathcal{L}_{kt}^{(2)} \ \ \ \text{where} \\ \nonumber
    & \mathcal{L}_k^{(1)} = \frac{\Gamma(\delta_1)^{-n_k^\star} \Gamma(\mu+n_k^\star\delta_1)}{\Gamma(\mu)} \beta^\mu \left( \prod_{\substack{i, j \in \mathcal{C}_k \\ i<j}} D_{ij}\right)^{\delta_1-1} \left( \beta + \sum_{\substack{i, j \in \mathcal{C}_k \\ i<j}}  D_{ij}\right)^{-\mu-\delta_1 n_k^\star} \\ 
    & \mathcal{L}_{kt}^{(2)} = \frac{\Gamma(\delta_2)^{-n_kn_t} \Gamma(\zeta+n_kn_t\delta_2)}{\Gamma(\zeta)} \gamma^\zeta \left( \prod_{\substack{i \in \mathcal{C}_k \\ j \in \mathcal{C}_t}} D_{ij}\right)^{\delta_2-1} \left( \gamma + \sum_{\substack{i \in \mathcal{C}_k \\ j \in \mathcal{C}_t}}  D_{ij}\right)^{-\zeta-\delta_2 n_k n_t}
\end{align}
and $n_k^\star= n_k(n_k-1)/2$, and $n_k$ is the number of objects in cluster $k$. 

\section{Hyperparameter selection} \label{sect:hyperpar_select}

The following algorithm outlines our steps to determine the hyperparameters in the distance matrix likelihood (\ref{eq:linear_simple}). The procedure is adapted from the one in \cite{natarajan2023cohesion}, which is developed for the likelihood in (\ref{eq:mainequation}). Instead of performing maximum likelihood estimation (MLE), which requires numerical methods and is prone to numerical issues due to the potentially small dimension of $\mathbf{b}$ (the distances between medoids), we opt for the method of moments.

\begin{algorithm}[H]
\caption{Choosing $\delta_1,\delta_2,\mu, \beta$, and $\theta$ from the distance matrix $\mathbf{D}$}
\begin{algorithmic}[1]
\State Compute an initial value for $K$, denoted as $K_{\text{elbow}}$, using the elbow method. This is achieved by applying K-medoids clustering (through the PAM, CLARA, or CLARANS algorithms) on the distance matrix $\mathbf{D}$ for a range of $K$ values, using the within-cluster-sum-of-squares (WSS) score as the criterion.
\State Perform K-medoids clustering with $K_{\text{elbow}}$ to obtain an initial cluster configuration.
\State Define two sets of distances: $\mathbf{a}$, which includes distances between each non-medoid object and its assigned medoid, and $\mathbf{b}$, which includes distances between the medoids.
\State Fit a Gamma distribution to the values in $\mathbf{a}$ and estimate $\delta_1$ using the method of moments, given by $\delta_1 = \Bar{\mathbf{a}}^2/S_\mathbf{a}^2$, where $\Bar{\mathbf{a}}$ and $S_\mathbf{a}^2$ represent the sample mean and variance of $\mathbf{a}$, respectively.
\State Set $\mu = \delta_1 n_{\mathbf{a}}$ and $\beta = \sum_{i} a_i$, where $n_{\mathbf{a}}$ is the cardinality of $\mathbf{a}$. This corresponds to the conditional posterior of $\lambda$, obtained by specifying an improper prior $\pi(\lambda) \propto I(\lambda > 0)$ and treating $\mathbf{a}$ as a weighted set of observations from a Gamma($\delta_1$, $\lambda$) distribution.
\State Fit a Gamma distribution to the values in $\mathbf{b}$ and estimate $\delta_2$ and $\theta$, the shape and rate parameters of this distribution. The method of moments yields $\delta_2 = \Bar{\mathbf{b}}^2/S_\mathbf{b}^2$ and $\theta = \Bar{\mathbf{b}}/S_\mathbf{b}^2$.
\end{algorithmic}
\end{algorithm}

\section{Proofs} \label{sect:proofs}

\subsection{Proof of Proposition \ref{prop:proposition1}}

\begin{proof}
The exchaengiblity property follows directly from \eqref{eq:mainequation} and \eqref{eq:linear_simple}. Providing a counterexample suffices to prove that the induced partition models do not generally satisfy the micro clustering property. 


For the second property, consider a fixed number of medoids $K$ and a distance matrix where the distance between all objects and object 1 is $a$, and all remaining distances between different objects are larger than $a$. The resulting partition model has one large cluster comprised of $M_N = N-K+1$ objects while the remaining $K-1$ medoids form their own singleton clusters. It then follows that 
$$
\frac{M_N}{N}\to \frac{N-K+1}{N} \neq 0
$$
and thus, there is a violation of the microclustering property.
\end{proof}

\section{Posterior inference} \label{sect:algorithms}

In this section, we present the MCMC algorithms employed to perform inference, starting with the Tesselation models for one layer in Section \ref{sect:tesselation} of the main paper, and then presenting the algorithms for the multi-view data based on nested partitions with tesselation models, joint tesselation models, and finally Stationary dependent random partition models with EPPFs.

\subsection{Tesselation models}\label{sect:Alg2}

We perform inference on the Voronoi tessellation models of Section \ref{sect:tesselation} of the main paper using the Metropolis-Hastings (MH) algorithm, detailed in Algorithm \ref{algo:2}. The MH algorithm involves birth, death, and move steps at each iteration. Specifically, we propose a new medoid set, $\boldsymbol{\gamma}^\star$, by adding, removing, or moving medoids from the medoid set of the previous iteration, $\boldsymbol{\gamma}^{(t-1)}$ \citep{denison2002bayesian1}. In our implementation, we randomly choose between a birth and move step, and at every second iteration, we perform a move step.

In the birth step, we uniformly select a new index from $[N] \setminus \boldsymbol{\gamma}$, representing the indices of objects that are not medoids. Similarly, for the death step, we uniformly select an index from $\boldsymbol{\gamma}$ to be removed. The move step combines a death step followed by a birth step. The constant $\beta$ in Algorithm \ref{algo:2} is the ratio of the proposal probabilities (the probability of moving from the new state to the old state divided by the probability of moving from the old state to the new state).

\begin{algorithm}[!ht]
\caption{Birth-Death and Move Algorithm for Tesselation Models}
\begin{algorithmic}[1]
\State Initialize parameters $\gamma^{(0)}$
\For{$t = 1, 2, \ldots, T$}
    \State Sample $u \sim \text{Uniform}(0, 1)$
    \If{$t \mod 2 = 0$} 
        \Comment{Move step}
        \State \multiline{Remove one medoid uniformly from $\gamma$ and add a new one uniformly from $[N] \setminus \gamma$ leading to $\gamma^*$ and compute $\mathcal{T}^*(\mathbf{D},\gamma^*)$ from \eqref{eq:gamma}}
        \State Calculate the acceptance ratio:
        \[
        \alpha = \min \left( 1, \frac{\pi(\mathbf{D}|\mathcal{T}^*)\pi(\gamma^*)}{\pi(\mathbf{D}|\mathcal{T})\pi(\gamma)} \times \beta \right), \quad \beta = \frac{|\gamma^{(t-1)}|(N-|\gamma^{(t-1)}|)}{|\gamma^*|(N-|\gamma^*|)} = 1
        \]
    \ElsIf{$u < 0.5$}
        \Comment{Birth step}
        \State \multiline{Add a new medoid uniformly from $[N] \setminus \gamma$ leading to $\gamma^*$ and compute $\mathcal{T}^*(\mathbf{D},\gamma^*)$ from \eqref{eq:gamma}}
        \State Calculate the acceptance ratio:
        \[
        \alpha = \min \left( 1, \frac{\pi(\mathbf{D}|\mathcal{T}^*)\pi(\gamma^*)}{\pi(\mathbf{D}|\mathcal{T})\pi(\gamma)} \times \beta \right), \quad \beta = \frac{N - |\gamma^{(t-1)}|}{|\gamma^*|}
        \]
    \Else
        \Comment{Death step}
        \State \multiline{Remove a medoid uniformly from $\gamma$ leading to $\gamma^*$ and compute $\mathcal{T}^*(\mathbf{D},\gamma^*)$ from \eqref{eq:gamma} }
        \State Calculate the acceptance ratio:
        \[
        \alpha = \min \left( 1, \frac{\pi(\mathbf{D}|\mathcal{T}^*)\pi(\gamma^*)}{\pi(\mathbf{D}|\mathcal{T})\pi(\gamma)} \times \beta \right), \quad \beta = \frac{|\gamma^{(t-1)}|}{N-|\gamma^*|}
        \]
    \EndIf
    \State Sample $v \sim \text{Uniform}(0, 1)$
    \If{$v < \alpha$}
        \State Accept the new parameters: $\gamma^{(t)} = \gamma^*$, $\mathcal{T}^{(t)} = \mathcal{T}^*$
    \Else
        \State Reject the new parameters: $\gamma^{(t)} = \gamma^{(t-1)}$, $\mathcal{T}^{(t)} = \mathcal{T}^{(t-1)}$
    \EndIf
\EndFor
\State \textbf{Output:} $\gamma^{(1)}, \gamma^{(2)}, \ldots, \gamma^{(T)}$, $\mathcal{T}^{(1)}, \mathcal{T}^{(2)}, \ldots, \mathcal{T}^{(T)}$
\end{algorithmic}
\label{algo:2}
\end{algorithm}

It is also straightforward to employ a Gibbs sampler for this model. Consider the auxiliary variables \(w_1, \dotsc, w_N\), where \(w_i = 1\) if object \(i\) is a medoid, and \(w_i = 0\) otherwise. The conditional probability is given by

\[
\pi(w_i = 1 \mid \mathbf{w}_{-i}, \mathbf{D}) = \frac{\pi(w_i = 1, \mathbf{w}_{-i}) \pi(\mathbf{D} \mid w_i = 1, \mathbf{w}_{-i})}{\pi(w_i = 1, \mathbf{w}_{-i}) \pi(\mathbf{D} \mid w_i = 1, \mathbf{w}_{-i}) + \pi(w_i = 0, \mathbf{w}_{-i}) \pi(\mathbf{D} \mid w_i = 0, \mathbf{w}_{-i})}
\]

The prior $\pi(w_i = 1, \mathbf{w}_{-i})$ follows from the prior on Section \ref{section:priors} of the main paper, which only depends on $N$ and the number of clusters $K=\sum_i^N w_i$.
We add object \(i\) to the medoid set with probability given by \(\pi(w_i = 1 \mid \mathbf{w}_{-i}, \mathbf{D})\), and otherwise remove it from the medoid set. At each step of the Gibbs sampler, this operation is repeated for all \(N\) objects. Thus, we increase the computational cost of each MCMC iteration by a factor of \(N\) compared to Algorithm \ref{algo:2}. However, combining the Birth-Death and Move algorithms with the Gibbs sampler may be beneficial since the Gibbs sampler gives an acceptance rate of 1, while Algorithm \ref{algo:2} will generally have a low acceptance rate. Future work aims to explore this approach in greater detail.

\subsection{Tesselation models for nested partitions}\label{sect:Algnested}

Algorithm \ref{algo:nested} is an adaptation of the MCMC algorithm in Algorithm \ref{algo:2}. In Algorithm \ref{algo:nested}, we first generate proposals for the medoid sets using the same birth, death and move steps of Algorithm \ref{algo:2}. After generating a proposal $\boldsymbol{\gamma}_1^\star$, we ensure that the pair ($\boldsymbol{\gamma}_1^\star$, $\boldsymbol{\gamma}_2^{(t-1)}$) is valid, as discussed in Section \ref{sect:Algnested} of the main paper. Subsequently, proposals for $\boldsymbol{\gamma}_2^\star$ need to be generated. We could have proceeded as before and used one birth, death, or move step of Algorithm \ref{algo:2}. However, we obtained better mixing by generating $K_1$ proposals and acceptance/rejectance steps for $\boldsymbol{\gamma}_2$ in series. For each of the $K_1$ clusters in layer 1, we consider a birth, death, and move step for layer two restricted to that cluster. At step $j$, we only add, remove, or move medoids inside $\mathcal{C}_{1j}$, the cluster $j$ of layer 1, instead of $[N]$.
\begin{algorithm}[!ht]
\caption{Birth-Death and Move Algorithm for Tessellation Models}
\begin{algorithmic}[1]
\State Initialize parameters $\boldsymbol{\gamma}_1^{(0)}$ and  $\boldsymbol{\gamma}_2^{(0)}$
\For{$t = 1, 2, \ldots, T$}
   \Statex
   \Statex // Update the first layer
   \State Generate $\boldsymbol{\gamma}_1^\star$ and $\mathcal{T}_1^*$ from steps 3-16 of Algorithm \ref{algo:2} (using matrix $\mathbf{D}^{(1)}$)
   \State Set $\boldsymbol{\gamma}_2^* = \boldsymbol{\gamma}_2^{(t-1)}$ 
   \Statex
   \Statex // Guarantee that medoid sets are valid
   \For{$i = 1, 2, \ldots, K_1$}
      \If{$\boldsymbol{\gamma}_2 \cap \mathcal{C}_{i z_{1j}} = \emptyset$} \Comment{Layer 2 has no medoids at cluster $\mathcal{C}_{i z_{1j}}$}
        \State Add the medoid $z_{1j}$ to $\boldsymbol{\gamma}_2^*$ \Comment{medoid of the cluster of object $j$ at Layer 1}
      \EndIf
   \EndFor
   \State Accept or reject $\boldsymbol{\gamma}_1^*$ and $\boldsymbol{\gamma}_2^*$ and obtain $\boldsymbol{\gamma}_1^{(t)}, \mathcal{T}_1^{(t)}, \boldsymbol{\gamma}_2^{(t)}$ and $\mathcal{T}_2^{(t)}$
   \Statex
   \Statex // Update the second layer cycling through each cluster of the first layer
   \For{$i = 1, 2, \ldots, K_1$}
      \State \multiline{Generate $\boldsymbol{\gamma}_2^\star$ and $\mathcal{T}_2^*$ from steps 3-16 of Algorithm \ref{algo:2} (using matrix $\mathbf{D}^{(2)}$ and \eqref{eq:nested2}) but consider:
      \begin{description}
          \item[Birth Step:] uniformly add a medoid in $\mathcal{C}_{1i} \setminus  \boldsymbol{\gamma}_2$ (instead of $[N] \setminus \boldsymbol{\gamma}_2$)
          \item[Death Step:] uniformly remove medoids in $\mathcal{C}_{1i} \cap \boldsymbol{\gamma}_2$ (instead of $\boldsymbol{\gamma}_2$)
          \item[Move Step:] Death step followed by a birth step
      \end{description}}
      \State Accept or reject $\boldsymbol{\gamma}_2^*$ and obtain $\boldsymbol{\gamma}_2^{(t)}$ and $\mathcal{T}_2^{(t)}$
   \EndFor
\EndFor
\State \textbf{Output:} $\boldsymbol{\gamma}_1^{(1)}, \boldsymbol{\gamma}_2^{(1)}, \ldots, \boldsymbol{\gamma}_1^{(T)}, \boldsymbol{\gamma}_2^{(T)}$, $\mathcal{T}_1^{(1)}, \mathcal{T}_2^{(1)}, \ldots, \mathcal{T}_1^{(T)}, \mathcal{T}_2^{(T)}$
\end{algorithmic}
\label{algo:nested}
\end{algorithm}

\subsection{Joint tesselation models}\label{sect:Algjoint}

Algorithm \ref{algo:joint} is an adaptation of the MCMC algorithm in Algorithm \ref{algo:2}. For computational efficiency, we avoid performing inference on $\alpha$ directly. We set the prior $\alpha \sim Beta(a,b)$ and then marginalize out $\alpha$ from the last three terms of the posterior density \eqref{eq:gibbsposterior} of the main paper, leading to
\begin{align*}
   C(\boldsymbol{\gamma}_1,\boldsymbol{\gamma}_2; \mathbf{D}^{(1)}, \mathbf{D}^{(2)}) &=  \int  \exp(-\phi(\alpha)d(\boldsymbol{\gamma}_1,\boldsymbol{\gamma}_2;\mathbf{D}^{(1)},\mathbf{D}^{(2)})) \pi(\alpha)d\alpha \\
   &= \frac{\Gamma(a) \text{T}[a,1-b,d(\boldsymbol{\gamma}_1, \boldsymbol{\gamma}_2;\mathbf{D}^{(1)},\mathbf{D}^{(2)}) ]}{\beta(a,b)}
\end{align*}
where T stands for the Tricomi confluent hypergeometric function \citep{abramowitz1948handbook}. Thus, \eqref{eq:gibbsposterior} of the main paper simplifies to
$$\Tilde{\pi}(\boldsymbol{\gamma}_1,\boldsymbol{\gamma}_2 | \mathbf{D}^{(1)}, \mathbf{D}^{(2)}) \propto \pi(\mathbf{D}^{(1)}|\boldsymbol{\gamma}_1)\pi(\mathbf{D}^{(2)}|\boldsymbol{\gamma}_2)C(\boldsymbol{\gamma}_1,\boldsymbol{\gamma}_2; \mathbf{D}^{(1)},\mathbf{D}^{(2)}) \pi(\boldsymbol{\gamma}_1,\boldsymbol{\gamma}_2)$$
After obtaining posterior samples of $\boldsymbol{\gamma}_1$ and $\boldsymbol{\gamma}_2$, we then compute posterior samples of $\alpha$ through
$$ \Tilde{\pi}(\alpha|\boldsymbol{\gamma}_1,\boldsymbol{\gamma}_2, \mathbf{D}^{(1)}, \mathbf{D}^{(2)})  \propto \exp(-\phi(\alpha)d(\boldsymbol{\gamma}_1,\boldsymbol{\gamma}_2;\mathbf{D}^{(1)}, \mathbf{D}^{(2)}))(1-\alpha)^{b-1} \alpha^{a-1}$$

\begin{algorithm}[!ht]
\caption{Birth-Death and Move Algorithm for Joint Tessellation Models}
\begin{algorithmic}[1]
\State Initialize parameters $\boldsymbol{\gamma}_1^{(0)}$ and  $\boldsymbol{\gamma}_2^{(0)}$
\For{$t = 1, 2, \ldots, T$}
   \State Generate $\boldsymbol{\gamma}_1^\star$, $\mathcal{T}_1^*$ and $\beta$ from steps 3-16 of Algorithm \ref{algo:2} (using matrix $\mathbf{D}^{(1)}$)
   \State Compute the acceptance probability:
   \[
   \alpha_1 = \min \left( 1,  \frac{\pi(\mathbf{D}^{(1)}|\mathcal{T}_1^*) C(\boldsymbol{\gamma}_1^*,\boldsymbol{\gamma}_2^{(t-1)}; \mathbf{D}^{(1)}, \mathbf{D}^{(2)}) \pi(\boldsymbol{\gamma}_1^*)}{\pi(\mathbf{D}^{(1)}|\mathcal{T}_1^{(t-1)})C(\boldsymbol{\gamma}_1^{(t-1)},\boldsymbol{\gamma}_2^{(t-1)}; \mathbf{D}^{(1)}, \mathbf{D}^{(2)})\pi(\boldsymbol{\gamma}_1^{(t-1)})} \times \beta   \right)
   \]
   \State \multiline{Set $\boldsymbol{\gamma}_1^{(t)} = \boldsymbol{\gamma}_1^\star$ and $\mathcal{T}_1^{(t)} = \mathcal{T}_1^\star$ with probability $\alpha_1$. Otherwise set $\boldsymbol{\gamma}_1^{(t)} = \boldsymbol{\gamma}_1^{(t-1)}$ and $\mathcal{T}_1^{(t)} = \mathcal{T}_1^{(t-1)}$}
   \State Generate $\boldsymbol{\gamma}_2^\star$, $\mathcal{T}_2^*$ and $\beta$ from steps 3-16 of Algorithm \ref{algo:2} (using matrix $\mathbf{D}^{(2)}$)
   \State Compute the acceptance probability:
   \[
   \alpha_2 = \min \left( 1,  \frac{\pi(\mathbf{D}^{(2)}|\mathcal{T}_2^*) C(\boldsymbol{\gamma}_1^{(t)},\boldsymbol{\gamma}_2^*; \mathbf{D}^{(1)}, \mathbf{D}^{(2)}) \pi(\boldsymbol{\gamma}_2^*)}{\pi(\mathbf{D}^{(2)}|\mathcal{T}_2^{(t-1)})C(\boldsymbol{\gamma}_1^{(t)},\boldsymbol{\gamma}_2^{(t-1)}; \mathbf{D}^{(1)}, \mathbf{D}^{(2)})\pi(\boldsymbol{\gamma}_2^{(t-1)})} \times \beta  \right)
   \]
   \State \multiline{Set $\boldsymbol{\gamma}_2^{(t)} = \boldsymbol{\gamma}_2^\star$ and $\mathcal{T}_2^{(t)} = \mathcal{T}_2^\star$ with probability $\alpha_2$ Otherwise set $\boldsymbol{\gamma}_2^{(t)} = \boldsymbol{\gamma}_2^{(t-1)}$ and $\mathcal{T}_2^{(t)} = \mathcal{T}_2^{(t-1)}$}
\EndFor
\State \textbf{Output:} $\boldsymbol{\gamma}_1^{(1)}, \boldsymbol{\gamma}_2^{(1)}, \ldots, \boldsymbol{\gamma}_1^{(T)}, \boldsymbol{\gamma}_2^{(T)}$, $\mathcal{T}_1^{(1)}, \mathcal{T}_2^{(1)}, \ldots, \mathcal{T}_1^{(T)}, \mathcal{T}_2^{(T)}$
\end{algorithmic}
\label{algo:joint}
\end{algorithm}

A Gibbs sampler can also be employed by considering the auxiliary variables $w_{i,j}$ for $i\in \{1,2\}$ and $j\in[N]$, where $w_{i,j}$ equals one if object $j$ at layer $i$ is a medoid and is zero otherwise. The Gibbs sampler is based on the following full conditional probabilities

\begin{align*}
    &\pi(w_{1,i} = 1 \mid \mathbf{w}_{1,-i}, \mathbf{w}_{2}, \mathbf{D}^{(1)},\mathbf{D}^{(2)}) = \frac{a}{a+b} \\
    a &= \pi(w_{1,i} = 1, \mathbf{w}_{1,-i}) \pi(\mathbf{D}^{(1)} \mid w_{1,i} = 1, \mathbf{w}_{1,-i}) C(w_{1,i} = 1, \mathbf{w}_{1,-i}, \mathbf{w}_{2}; \mathbf{D}^{(1)}, \mathbf{D}^{(2)}) \\ 
    b &= \pi(w_{1,i} = 0, \mathbf{w}_{1,-i}) \pi(\mathbf{D}^{(1)} \mid w_{1,i} = 0, \mathbf{w}_{1,-i}) C(w_{1,i} = 0, \mathbf{w}_{1,-i}, \mathbf{w}_{2}; \mathbf{D}^{(1)}, \mathbf{D}^{(2)})
\end{align*}
and 
\begin{align*}
    &\pi(w_{2,i} = 1 \mid \mathbf{w}_{2,-i}, \mathbf{w}_{1}, \mathbf{D}^{(1)},\mathbf{D}^{(2)}) = \frac{a}{a+b} \\
    a &= \pi(w_{2,i} = 1, \mathbf{w}_{2,-i}) \pi(\mathbf{D}^{(2)} \mid w_{2,i} = 1, \mathbf{w}_{2,-i}) C(\mathbf{w}_{1}, w_{2,i} = 1, \mathbf{w}_{2,-i} ; \mathbf{D}^{(1)}, \mathbf{D}^{(2)}) \\ 
    b &= \pi(w_{2,i} = 0, \mathbf{w}_{2,-i}) \pi(\mathbf{D}^{(2)} \mid w_{2,i} = 0, \mathbf{w}_{2,-i}) C(\mathbf{w}_{1}, w_{2,i} = 0, \mathbf{w}_{2,-i} ; \mathbf{D}^{(1)}, \mathbf{D}^{(2)})
\end{align*}

\subsection{Stationary dependent random partition models with EPPFs}\label{sect:AlgEPPF}

Here, we overview the main steps in performing inference in stationary and dependent random partition models. For more details, see \cite{page2022dependent}. First, we have,
\begin{equation*}
    \pi(\mathcal{T}_2|\mathcal{T}_1,\boldsymbol{\kappa})=\frac{\pi(\mathcal{T}_2)}{\pi(\mathcal{T}_2^\mathcal{R})} I[\mathcal{T}_1^\mathcal{R}=\mathcal{T}_2^\mathcal{R}]
\end{equation*}
where $\mathcal{R}=\{i:\kappa_i=0\}$ contains the indexes of the objects that are not subject to reassignment, and  $\mathcal{T}_2^\mathcal{R}$ is the subset of $\mathcal{T}_2$ that contains the objects that are not subject to reassignment. Then, the full conditional for $\kappa_i$ is given by
\begin{equation}\label{eq:PY1}
    \pi(\kappa_i | - ) = \frac{\alpha}{\alpha+\left(1-\alpha\right) \operatorname{Pr}\left(\mathcal{T}_2^{\mathcal{R}^{(+i)}}\right) / \operatorname{Pr}\left(\mathcal{T}_2^{\mathcal{R}^{(-i)}}\right)} \mathrm{I}\left[\mathcal{T}_{1}^{\mathcal{R}^{(+i)}}=\mathcal{T}_2^{\mathcal{R}^{(+i)}}\right] 
\end{equation}
where $\mathcal{T}_{2}^{\mathcal{R}^{(+i)}}$ and $\mathcal{T}_{2}^{\mathcal{R}^{(-i)}}$ is the cluster $\mathcal{T}_{2}^{\mathcal{R}}$ including and excluding the object $i$, respectively. For the cluster allocation labels $\mathbf{z}_2$, we only update the elements $j$ that are subject to reallocation, i.e., $j: \kappa_j=0$, according to 
\begin{equation}\label{eq:PY2}
\pi(z_{2,j} = k| - )  \propto \pi( \mathbf{D}^{(2)}|z_{2,j} = k, \mathbf{z}_{2,-j})\pi(z_{2,j} = k, \mathbf{z}_{2,-j}) \ \ \ \ \ \ k = 1,\dotsc, K_2^{(-i)}+1
\end{equation}
where $K_2^{(-i)}$ is the number of clusters in layer two if we remove object $i$. The first density on the right-hand side is given by \eqref{eq:Datasimple}, and the second density can be calculated from the EPPF. To update $z_{1,j}$, we ensure that when reallocating object $j$ to cluster $k$, obtaining the new partition $\mathcal{T}_{1,z_{1,j}=k}$, we preserve compatibility with $\mathcal{T}_{2}$, in the sense that $\mathcal{T}_{2}$ can be obtained from $\mathcal{T}_{1,z_{1,j}=k}$ by only updating the objects $j$ where $\kappa_j=1$. Thus, the full conditional is
\begin{equation}\label{eq:PY3}
\pi(z_{1,j} = k| - ) \propto
\pi( \mathbf{D}^{(1)}|z_{1,j} = k, \mathbf{z}_{1,-j})\pi(z_{1,j} = k, \mathbf{z}_{1,-j})I[\mathcal{T}_{1,z_{1,j}=k}^\mathcal{R} = \mathcal{T}_{2}^\mathcal{R}] \ \ \ \ \ k = 1,\dotsc, K_1^{(-i)}+1 
\end{equation}


We set the prior $\alpha \sim \text{Beta}(a,b)$, and so the full conditional for $\alpha$ is
\begin{equation}\label{eq:PY4}
\pi(\alpha| - ) \propto \text{Beta}(a+\sum_i^N\kappa_i, \ b+N-\sum_i^N\kappa_i)
\end{equation}
We fixed $M$ and $\theta$ in the main application and simulations, so no inference was performed in these parameters. Finally, we implemented a Gibbs sampler based on the full conditionals \eqref{eq:PY1}, \eqref{eq:PY2}, \eqref{eq:PY3}, and \eqref{eq:PY4}, to obtain posterior samples for $\mathcal{T}_1,\mathcal{T}_2,\boldsymbol{\kappa}$, and $\alpha$.


\section{Numismatics application}\label{sect:num_appendix}

\subsection{Workflow for image preprocessing and distance computation}\label{app:workflow}

\begin{figure}[h]
    \centering
    \includegraphics[width=0.60\linewidth]{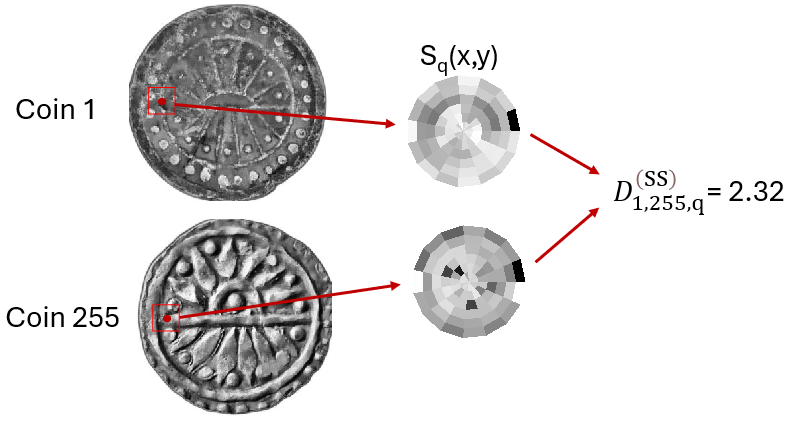}
    \caption{We show the self-similarity descriptors $S_q(x,y)$ in polar coordinates centered on the red dot for two coins and the Euclidian distance between those two descriptors.}
    \label{fig:ss_descriptor}
\end{figure}

Besides the D2-Net \citep{dusmanu2019d2} employed in \cite{harris2023past} to match key points between the two coins, we also compute local self-similarity descriptors. Figure \ref{fig:ss_descriptor} shows the self-similarity descriptor for a given pixel $q$. A surrounding image region centered at $q$ (with a radius of 40 pixels) is compared with the inner region patch of 5-pixel radius centered on $q$ using the sum of square differences (SSD) in the color space. The resulting distance surface $SSD_q(x,y)$ is normalized to form a ``correlation surface" $S_q(x,y)$:
\begin{equation}
S_q(x,y) = \exp\left(-\frac{SSD_q(x,y)}{\max(\text{var}_{\text{noise}},\text{var}_{\text{auto}}(q))}\right)
\end{equation}
where $\text{var}_{\text{noise}}$ is a constant for acceptable photometric variations and $\text{var}_{\text{auto}}(q)$ accounts for the pattern structure and contrast at $q$. This ensures that sharp edges are more tolerant to pattern variations than smooth patches. The correlation surface $S_q(x,y)$ is then converted into log-polar coordinates centered at $q$ and divided into 90 bins (15 angles, six radial intervals). The maximum correlation value in each bin forms the descriptor vector $d_q$, normalized to a range of $[0,1]$ to measure the similarity between patches while considering geometric and photometric variations. More details about this descriptor can be found in \cite{shechtman2007matching}. Afterward, we compute the Euclidean distance between the descriptor centered on $q$ between coins $i$ and $j$, leading to the distance $D^{(SS)}_{i,j,q}$.

We detail next the workflow for image preprocessing and distance computation of Section \ref{section:numismatics} of the main paper. We mainly utilize the R programming language, but steps 3,4, and 6 are performed in Mathematica \cite{Mathematica}, and steps 1, 2, 5, and 7 are performed in Matlab \cite{MATLAB}. The workflow is based on the workflow in \cite{harris2023past}, but we add steps 3, 4, 5, 6, and 8. We apply the preprocessing steps (1 to 5) and distance computation steps (6 to 9) separately for the coin obverse and reverses, and the steps are
\begin{enumerate}
    \item \textit{Conversion and Resizing:} We transform all images to grayscale and standardize their dimensions to 300x300 pixels.
    \item \textit{Noise Reduction and Restoration:} We perform total-variation image restoration \cite{rudin1992nonlinear} on each image to mitigate noise generated during image acquisition due to low lighting conditions. This process helps to smooth out minor imperfections while maintaining the sharpness of edges and corners. Next, we use contrast-limited adaptive histogram equalization \citep{pizer1987adaptive} to enhance local contrast and edge definition. Total-variation restoration is applied once more to address any artifacts these local transformations introduce.
    \item \textit{Background Removal:} We employ the \texttt{RemoveBackground} function in Mathematica to isolate the coins from the background.
    \item  \textit{Image Alignment:} We align the images by minimizing the mean squared error between the coin images and the reference coin image (the first one in the dataset). We utilize the \texttt{ImageAlign} function in Mathematica with a mean square gradient descent method. The image alignment is necessary for steps 6 and 8.
    \item  \textit{Color histogram adjustment:} We adjust the histograms of each coin image to match the reference image, standardizing brightness and contrast across the dataset (needed for step 6). We employ the function \texttt{imhistmatch} in Matlab.
    \item \textit{Direct Distance and Similarity Measures:} We compute a variety of direct distances and similarity measures between the coin images, such as Euclidean distance, cosine distance, mean pattern intensity, gradient correlation, structural similarity index (SSIM), Wasserstein distance, and mutual information variation. These measures are computed using the \texttt{ImageDistance} function in Mathematica.
    \item \textit{D2-Net model:} The D2-Net model incorporates a 16-layer VGG16 convolutional neural network to detect and describe key points within the coin images that characterize their visual attributes. D2-Net yields a representation of the coin image that is simultaneously a detector (i.e., a locator of key points in the image) and a descriptor (i.e., represented by an array of features describing the image properties at the corresponding key points). A detailed introduction and implementation guidance of the D2-Net model are in \cite{dusmanu2019d2}. Following the feature extraction at the matched key points, the images are paired and analyzed using an efficient approach to nearest neighbor search \citep{muja2009fast}. This process matches landmarks identified among the D2-Net key points across different coin images. The procedure includes using Gaussian Processes landmarking \citep{gao2019gaussian} to rank the reliability of these landmarks for image comparison. From this procedure, we obtain the number of matches between coin images $i$ and $j$ (with a larger number of matches related to larger similarities),  $\text{S}^{(D2)}_{i,j}$.
    \item \textit{Local self-similarity descriptors:} We employ the  “local self-similarity descriptors” in \cite{shechtman2007matching}, which captures internal geometric patterns within images while accounting for small local affine deformations. It captures self-similarity
 of edges, repetitive patterns, and complex textures in a single unified way. For each coin image, we compute these descriptors in 3600 equally spaced locations, obtaining a vector $\mathbf{v}_{i,q}$, which is the descriptor at location $q$ of image $i$. We employ the C++ implementation in \cite{chatfield2009efficient}. Since the images have been aligned, we can directly compute the Euclidean distances between the descriptors of coin $i$ and $j$ at location $q$: $\text{D}^{(SS)}_{i,j,q} = ||\mathbf{v}_{i,q} - \mathbf{v}_{j,q} ||_2$. 
    \item \textit{Compute the final distances:} From the previous three steps, we obtain a variety of distances and similarity measures for each coin pair. We now have to select which similarities/distances are most relevant and how to weigh them appropriately. The similarity measures are given a negative weight (higher similarity means smaller distance). Most of the self-similarity descriptors $\text{D}^{(SS)}_{i,j,l}$ are not relevant since they pertained to locations outside the coins or locations inside the coins with slight variation between coin pairs. To select the best distances and similarity measures, we consider a subset of 50 coins. We fit a logistic regression model with a Lasso penalty, where the response is one if a coin pair was in the same die (or 0 otherwise), and the covariates are all the distances and similarities computed for each coin pair. The fitted regression coefficients from the logistic regression inform which of the distances/similarities are the most relevant, also allowing us to adequately compute the weight of each distance/similarity measure. Finally, from the weights in Table \ref{tab:weights1} and \ref{tab:weights2}, we compute the final distances between each coin pair (for the obverses and the reverses) by doing the weighted average of all the distances and similarity measures. After obtaining the final distance matrix $\mathbf{D}$, we further standardized it to obtain \( \tilde{\mathbf{D}} \). Subsequently, we apply the transformation \( Q(\Phi(D_{ij})) \), where \( Q \) is the quantile function of the Gamma distribution with a shape parameter of 3 and a rate parameter of 5. This final transformation is performed to ensure that the histograms of the distances would resemble a Gamma distribution more closely (see Figure \ref{fig:coins2}). This transformation enhances the fit to the model described in \eqref{eq:mainequation} and \eqref{eq:linear_simple}, which assumes a Gamma distribution.

\end{enumerate}

\begin{table}[]
\centering
\begin{tabular}{cc}
Distance/Similarity                                                          & Weight \\ \hline
Number of Matches in D2-Net                                                  & -8     \\
SSIM                                                                         & +7     \\
Self-similarity descriptor at location 56   & +4     \\
Self-similarity descriptor at location 306    & +4     \\
Mean Pattern Intensity                                                       & -3     \\
Gradient correlation                                                         & -2     \\\hline 
\end{tabular}
\caption{Weights used to compute the final distances of the coin obverses (rounded weights).}
\label{tab:weights1}
\end{table}

\begin{table}[]
\centering
\begin{tabular}{cc}
Distance/Similarity                                                          & Weight \\ \hline
Cosine Distance                                                              & +11    \\
Number of Matches in D2-Net                                                  & -8     \\
SSIM                                                                         & +7     \\
Normalized Mutual Variation Information                                      & +5     \\
Self-similarity descriptor at location 63    & +4     \\
Self-similarity descriptor at location 199   & +4     \\ \hline
\end{tabular}
\caption{Weights used to compute the final distances of the coin reverses (rounded weights).}
\label{tab:weights2}
\end{table}

\subsection{Additional results and figures for the numismatics application}\label{sect:numismatics_figures}

This section presents supplementary results and figures pertinent to the numismatics application discussed in Section \ref{section:numismatics}. Figure \ref{fig:coins2} illustrates the 1\% quantile of the pairwise distances between an individual coin and all other coins. A high 1\% quantile indicates that a coin is potentially very different from most other coins, suggesting it could belong to a singleton cluster.

Prior to conducting the analysis, it was anticipated that the majority of coins in this application would be singletons and should not be clustered with any other coins. Consequently, a threshold for the 1\% quantile of the pairwise distances was established to classify a coin as belonging to a singleton cluster or not. After that, we employ the clustering algorithm exclusively for the non-singleton coins. This approach reduces the dimensionality of the problem and enhances the performance of the clustering algorithm. Initially, the 1\% quantile of the pairwise distances was modeled using a mixture of two Gamma distributions. Figure \ref{fig:coins2} shows that the two groups suggest a demarcation between non-singleton and singleton coins at approximately 0.07. However, a threshold of 0.15 is selected to adopt a more conservative approach.

\begin{figure}[h]
    \centering
    \includegraphics[width=0.75\linewidth]{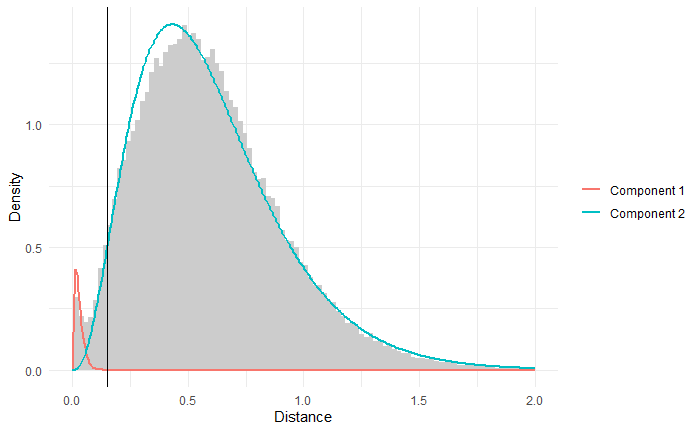}
    \caption{Mixture of Gamma distributions fit (blue and red curves) to the observed average pairwise distance from one coin to the other coins (grey histogram). The black vertical line indicates the chosen threshold. Coins whose average distance to other coins is above the threshold are classified as singleton coins.}
    \label{fig:coins2}
\end{figure}

Table \ref{tab:table_numismatics} summarizes each model's posterior number of clusters, running time, and the adjusted Rand index (ARI) between the true clustering structure and the cluster estimate (considering all coins). In Figure \ref{fig:adjacency_matrix}, we show the true adjacency matrices for the coin obverses and reverses, and Figure \ref{tab:figure_table1} and \ref{tab:figure_table2} show the posterior co-clustering matrices for the independent and joint models, respectively. Finally, Figure \ref{fig:alpha2} shows the posterior distribution of the dependency parameter $\alpha$ for the joint PY and tesselation models, where the posterior median was 0.98 and 0.83, respectively.

\begin{table}[H]
    \centering
    \begin{tabular}{ccccc}
        Model & Layer & ARI & K & Time\\
        \hline
        K-medoids (PAM) & 1 & 0.24 & 398 (fixed) & $< 1$ s \\
        K-medoids (PAM) & 2 & 0.31 & 398 (fixed) & $< 1$ s \\
        K-medoids (CLARA) & 1 & 0.24 & 398 (fixed) & $< 1$ s \\
        K-medoids (CLARA) & 2 & 0.27 & 398 (fixed) & $< 1$ s \\
        PY independent & 1 & 0.79 & 426 (0.53) & 19178 s\\
        PY independent & 2 & 0.79 & 416 (2.12) & 15296 s\\
        PY joint & 1 & 0.93 & 435 (0.62) & $\approx 5 $ days\\ 
        PY joint & 2 & 0.86 & 435 (1.02) & $\approx 5 $ days\\
        Tesselation independent & 1 & 0.92 & 405 (2.39) & 69 s\\
        Tesselation independent & 2 & 0.80 & 405 (3.35) & 72 s\\
        Tesselation nested & 1 & 0.82 & 399 (2.85) & 1709 s\\
        Tesselation nested & 2 & 0.91 & 437 (1.25) & 1709 s \\
        Tesselation joint & 1 & 0.89 & 395 (2.82) & 141 s\\
        Tesselation joint & 2 & 0.81 & 407 (2.15) & 141 s
    \end{tabular}
    \caption{Adjusted Rand Index (ARI) between the cluster estimates and true clustering, posterior mean (and standard) deviation of the number of clusters $K$, and running time of the algorithms for coin obverses (layer 1) and coin reverses (layer 2).}
    \label{tab:table_numismatics}
\end{table}

\begin{figure}[h]
    \centering
    \includegraphics[width=0.45\linewidth]{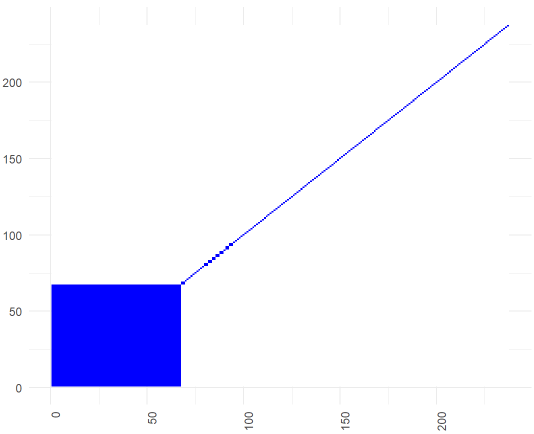}
    \includegraphics[width=0.45\linewidth]{images/adj_REV.png}
    \caption{Adjacency matrix of the true clusters for obverses (left) and reverses (right).}
    \label{fig:adjacency_matrix}
\end{figure}

\begin{table}[h]
    \centering
    \begin{tabular}{ccc}
        \textbf{Obverse} & \textbf{Reverse} & \\ \hline
        \includegraphics[width=0.415\linewidth, height = 6cm]{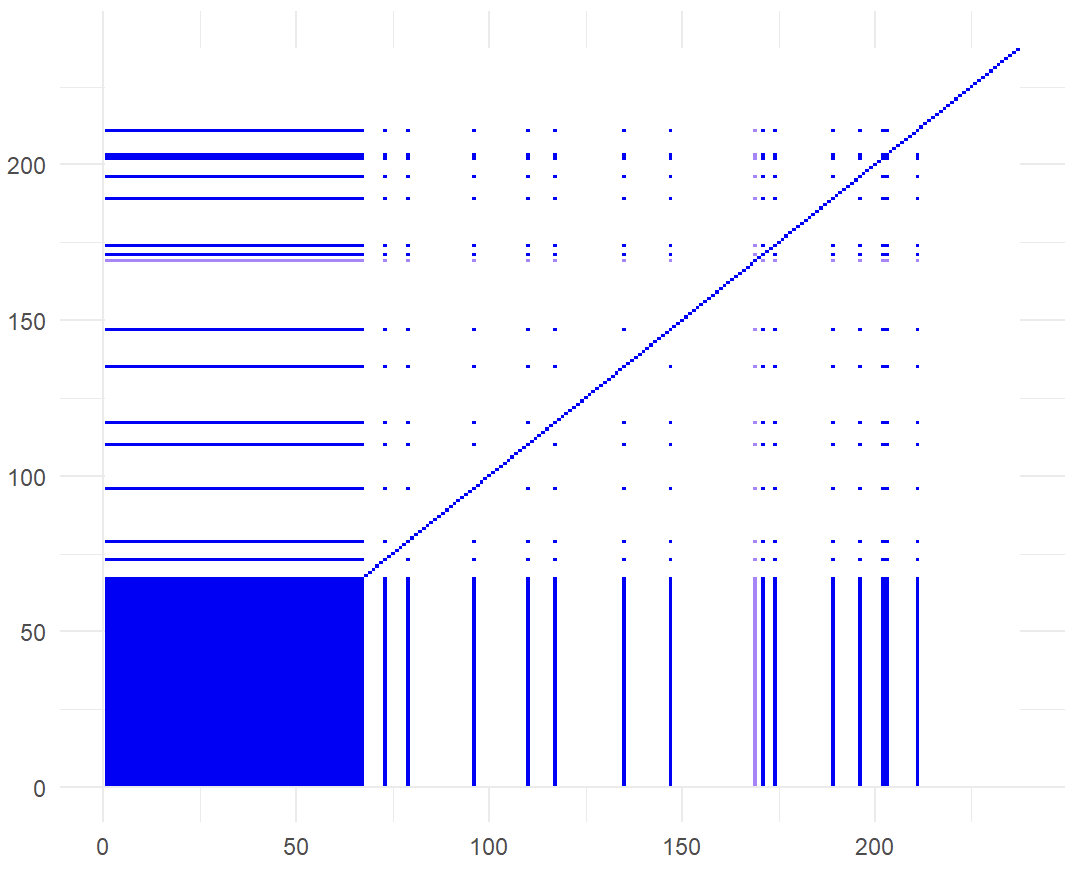} &
        \includegraphics[width=0.415\linewidth, height = 6cm]{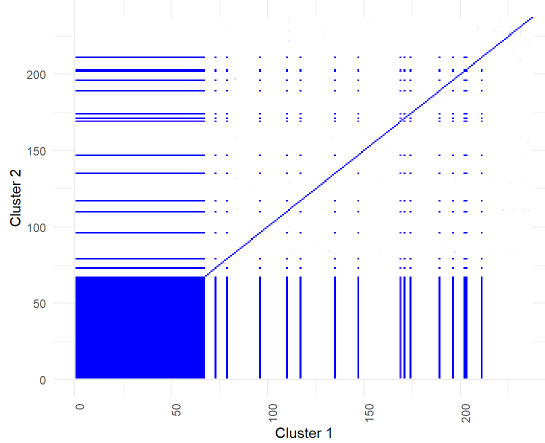} & 
        \includegraphics[width=0.07\linewidth, height = 6cm]{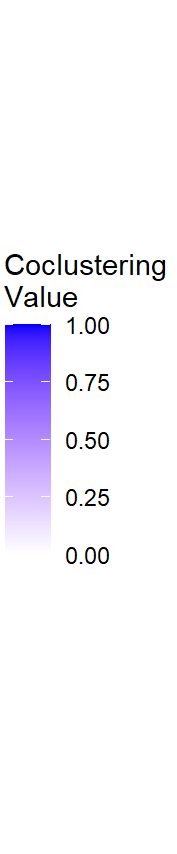}\\\hline \\
        \includegraphics[width=0.415\linewidth, height = 6cm]{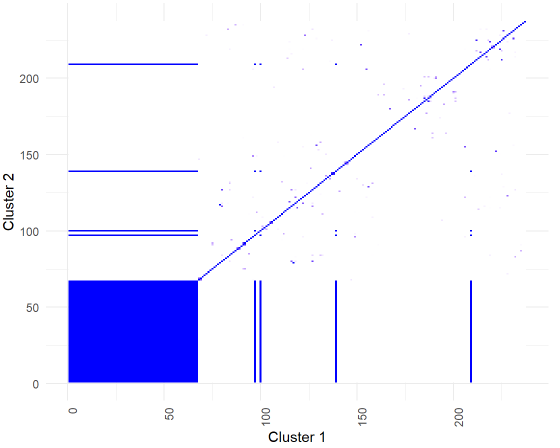} &
        \includegraphics[width=0.415\linewidth, height = 6cm]{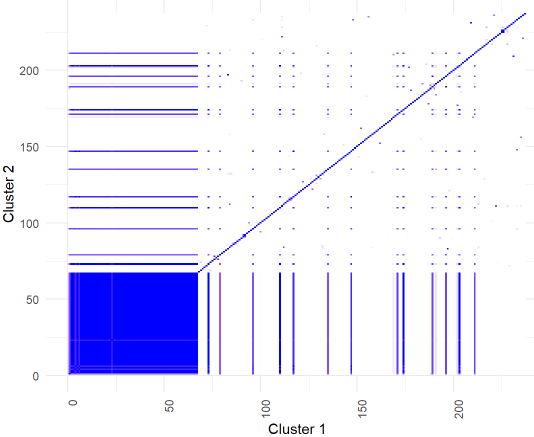}& 
        \includegraphics[width=0.07\linewidth, height = 6cm]{images/scale.png}\\\hline \\
        \includegraphics[width=0.415\linewidth, height = 6cm]{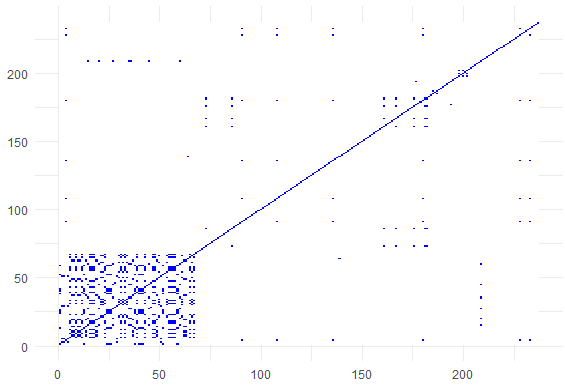} &
        \includegraphics[width=0.415\linewidth, height = 6cm]{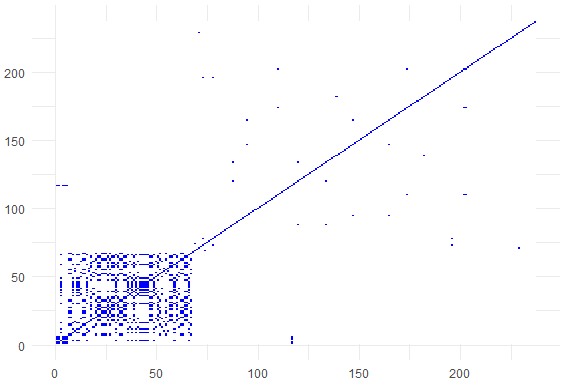} & 
        \includegraphics[width=0.07\linewidth, height = 6cm]{images/scale.png}
    \end{tabular}
    \caption{Co-clustering matrices for the obverses and reverses for the independent Pitman-Yor model (first row), independent tessellation model (second row), and the PAM algorithm (third row).}
    \label{tab:figure_table2}
\end{table}

\begin{table}[h]
    \centering
    \begin{tabular}{ccc}
        \textbf{Obverse} & \textbf{Reverse} & \\ \hline
        \includegraphics[width=0.415\linewidth, height = 6cm]{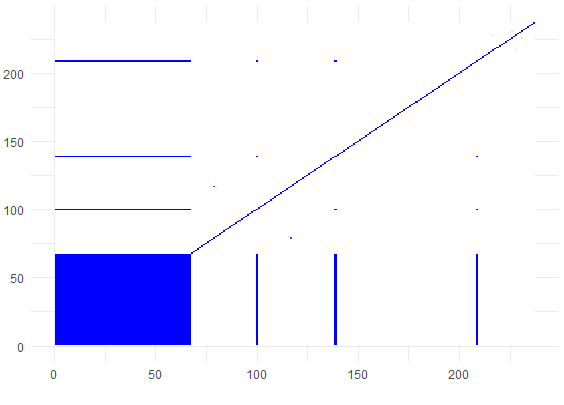} &
        \includegraphics[width=0.415\linewidth, height = 6cm]{images/joint_PY_REV.png} & 
        \includegraphics[width=0.07\linewidth, height = 6cm]{images/scale.png}\\\hline \\
        \includegraphics[width=0.415\linewidth, height = 6cm]{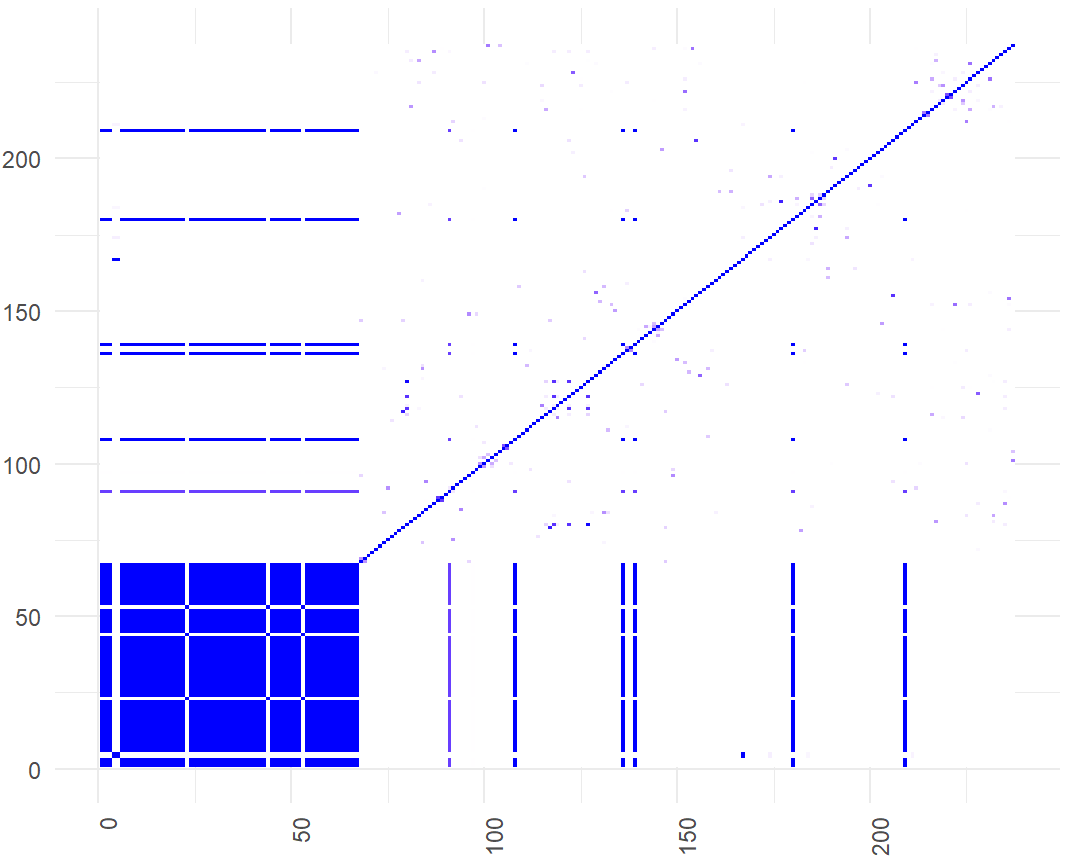} &
        \includegraphics[width=0.415\linewidth, height = 6cm]{images/nested_tesselation_REV.png}& 
        \includegraphics[width=0.07\linewidth, height = 6cm]{images/scale.png}\\\hline \\
        \includegraphics[width=0.415\linewidth, height = 6cm]{images/joint_tesselation_OBV.png} &
        \includegraphics[width=0.415\linewidth, height = 6cm]{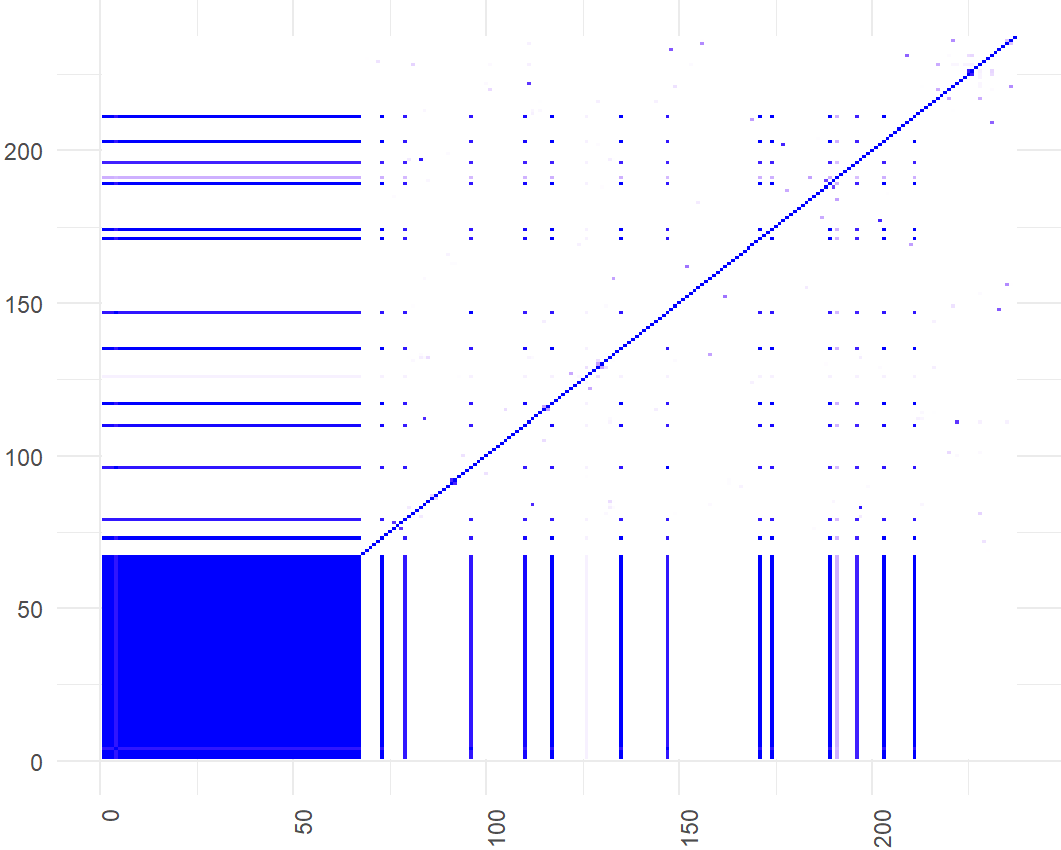} & 
        \includegraphics[width=0.07\linewidth, height = 6cm]{images/scale.png}
    \end{tabular}
    \caption{Co-clustering matrices for the obverses and reverses for the joint Pitman-Yor model (first row), nested tessellation model (second row), and joint tessellation model (third row).}
    \label{tab:figure_table1}
\end{table}



\begin{figure}[h]
    \centering
    \includegraphics[width=0.45\linewidth]{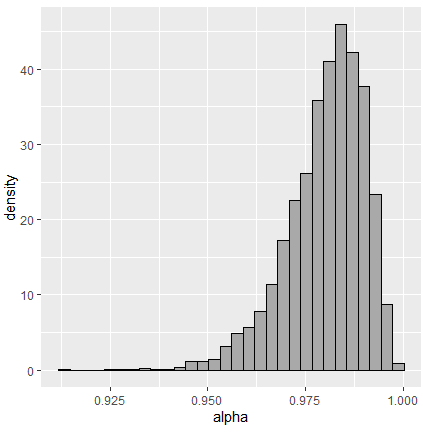}
    \includegraphics[width=0.45\linewidth]{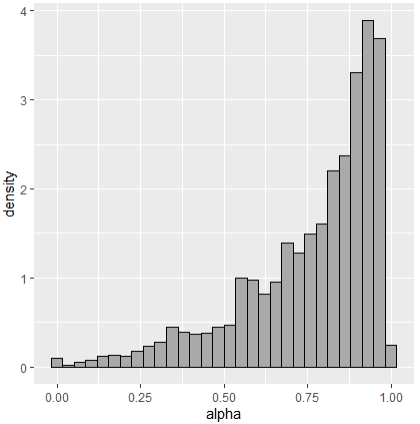}
    \caption{Dependency parameter $\alpha$ for the joint PY model (left) and joint tesselation model (right).}
    \label{fig:alpha2}
\end{figure}

\end{appendices}

\end{document}